\pgfplotsset{compat=newest}
\theoremstyle{definition} 
\theoremstyle{plain} 
\newtheorem{theorem}{Theorem}
\newtheorem{lemma}[theorem]{Lemma}
\theoremstyle{remark} 
\newtheorem*{remark}{Remark}
\newcommand{\R}{\mathbb{R}}
\newcommand{\x}{x}
\newcommand{\eps}{\epsilon}
\newcommand{\Omeps}{\Omega_{\eps}}
\newcommand{\Om}{\Omega}
\newcommand{\nv}{\vec{n}}
\newcommand{\Vv}{\vec{V}}
\newcommand{\uv}{u}
\newcommand{\Qv}{\vec{Q}}
\newcommand{\nab}{\nabla}
\newcommand{\Ga}{\Gamma_1}
\newcommand{\Gb}{\Gamma_2}
\newcommand{\Gc}{\Gamma_3}
\newcommand*\diff{\mathop{}\!\mathrm{d}}
\newcommand{\expnumber}[2]{{#1}\mathrm{e}{#2}}
\newcommand\restr[2]{{
		\left.\kern-\nulldelimiterspace 
		#1 
		\vphantom{\big|} 
		\right|_{#2} 
}}
\newcommand{\N}{\mathbb{N}}
\title{Shape Optimization for the Mitigation of Coastal Erosion via Smoothed Particle Hydrodynamics}
\author{ \href{https://orcid.org/0000-0001-9930-065X}{\includegraphics[scale=0.06]{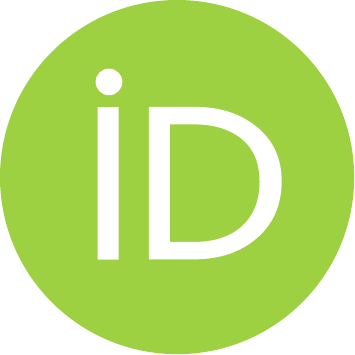}\hspace{1mm}Luka Schlegel} \\
	Department of Mathematics\\
	Universität Trier\\
	Universitätsring 15, 54296 Trier\\
	\texttt{schlegel@uni-trier.de} \\
	\And
	\href{https://orcid.org/0000-0001-7665-130X}{\includegraphics[scale=0.06]{orcid.pdf}\hspace{1mm}Volker Schulz} \\
	Department of Mathematics\\
	Universität Trier\\
	Universitätsring 15, 54296 Trier\\
	\texttt{volker.schulz@uni-trier.de} \\
}
\begin{document}
\maketitle

\begin{abstract}
	Adjoint-based shape optimization most often relies on Eulerian flow field formulations. However, since Lagrangian particle methods are the natural choice for solving sedimentation problems in oceanography, extensions to the Lagrangian framework are desirable. For the mitigation of coastal erosion, we perform shape optimization for fluid flows, that are described by Lagrangian shallow water equations and discretized via smoothed particle hydrodynamics. The obstacle's shape is hereby optimized over an appropriate cost function to minimize the height of water waves along the shoreline based on shape calculus. Theoretical results will be numerically verified exploring different scenarios.
\end{abstract}

\keywords{Shape Optimization \and SPH \and Coastal Erosion}

\section{Introduction}
Coastal erosion describes the displacement of land caused by destructive sea waves, currents or tides. Major efforts have been made to mitigate these effects using groins, breakwaters and various other structures. Numerical investigations of useful shapes require accurate descriptions of propagating waves and sediment. Lagrangian particle movements appear to be the natural choice for solving sedimentation problems in oceanography, being generally free of diffusion. In contrast, inherent numerical diffusion in Eulerian methods can only affect the resolution of purely advective flows. It is therefore tempting to investigate shape optimization techniques for Lagrangian fluid flows. Typically, associated meshfree particle methods can be divided in two groups, the ones that approximate the weak form, e.g. the diffusive element \cite{Nayroles1992}, the element-free Galerkin \cite{Belytschko1994} or the reproducing kernel method \cite{Liu1995}, such as the ones that approximate the strong form, e.g. the moving particle \cite{Koshizuka1996} or the vortex method \cite{Ogami1991}. In this chapter we investigate a particle flow that falls into the latter class - the smoothed particle hydrodynamics (SPH) \cite{Monaghan2002}. Both groups share the property that shape optimization has been only rarely investigated. In general most approaches are based on the weak form, introduced in order to enable large deformations, e.g. for the element-free Galerkin method \cite{Bobaru2002} or the reproducing kernel particle method \cite{Grindeanu1998}\cite{Chen2007}. Lately a one-way coupled, volume averaged transport model was used to circumvent the direct usage of the Lagrangian particle flow \cite{Hohmann2019}. In particular for SPH fluids only boundary contributions via ghost particles and the direct differentiation method to obtain optimal fluid-structure interactions have been investigated \cite{Ha2011}. However, this method comes with the drawback of solving an additional problem for each design parameter, which can become costly for highly resolved obstacles. Hence, adjoint-based shape optimization for SPH fluids and accordingly also the usage in mitigation of coastal erosion appears novel. \\
The paper is structured as follows: We derive adjoints for a general class of particle systems in Section \ref{Sec:AdjointParticleSystems}, before Section \ref{Sec:SPH} will build the foundation for discretizing shallow water equations (SWE) via SPH-fluids, restructuring rigid and outflow boundaries via signed-distance fields \cite{Koschier2017} for the use in optimization. Subsequently, Section \ref{Sec:SOSPH} will derive shape derivatives based on the adjoints that have been developed for the general case. The technique boils down to derive shape derivatives for a finite element interpolator building up on discrete investigations \cite{Berggren2010}\cite{Schneider2008}. The results are numerically verified in simplified scenarios using a gradient-descent algorithm. In this setting, partial derivatives required for recursive adjoints are calculated via automatic differentiation and a deformation gradient is obtained from linear elasticity evaluations \cite{Schulz2016}. 

\section{Adjoint for Particle Systems}
\label{Sec:AdjointParticleSystems}
In this section adjoints for a system of $d-$dimensional particles consisting of position and velocities $X_k=(\x_k,\uv_k)\in\R^{dN}\times\R^{dN}$ for time steps $k\in\{1,...,n\}$ with constant particle mass $m>0$ are derived. The iteration laws are described by a symplectic Euler, i.e. for force functions $F:\R^{dN}\times\R^{dN}\times \Psi\rightarrow\R^{dN}$ and to be determined control $q\in\Psi$. We solve
	\begin{equation}
	\begin{aligned}
	\uv_{k}&=\uv_{k-1}+\frac{\Delta t}{m}F(\uv_{k-1},\x_{k-1},q)\\
	\x_{k}&=\x_{k-1}+\Delta t\uv_{k}.
	\label{Eq12OriginalSymplecticEuler}
	\end{aligned}			
	\end{equation}
The solution of the particle system constrains a time-dependent objective function $J:\R^{ndN}\times\R^{ndN}\times \Psi\rightarrow\R$, i.e.
\begin{equation}
\begin{aligned}
J(\uv,\x,q)=\sum_{k=1}^{n}J_k(\uv_k,\x_k,q)\text{.}
\label{Eq:ObjAdj}
\end{aligned}			
\end{equation}
In this setting, we can derive the following theorem, consisting of recursively defined adjoints and an equation for the sensitivity calculation with respect to control.
\begin{theorem}
(Adjoint Particle System)
	\label{Theo:AdjointsParticle}
	Assume a particle system is solved iteratively using iteration (\ref{Eq12OriginalSymplecticEuler}), then sequences $\{\delta_k\}_{k=1}^n$ and $\{\mu_k\}_{k=1}^n$ required in adjoint computations are obtained from backward recursion, i.e.
	\begin{align}
	\delta_k&=\delta_{k+1}+\frac{\Delta t}{m}(F^{\x_k})^T\mu_{k+1}+\frac{\Delta t^2}{m}(F^{\x_k})^T\delta_{k+1}+\left(J_k^{\x_k}\right)^T\\
	\mu_k&=\mu_{k+1}+\frac{\Delta t}{m}(F^{\uv_k})^T\mu_{k+1}+\frac{\Delta t}{m}\delta_{k+1}+\frac{\Delta t^2}{m}(F^{\uv_k})^T\delta_{k+1}+\left(J_k^{\uv_k}\right)^T\text{,}
	\label{Eq:AdjointRecursion}
	\end{align}
	where the recursion starts at
	\begin{align}
\mu_n=J_n^{\uv_{n}} \quad \delta_n=J_n^{\x_{n}}\text{.}
\end{align}
	The sensitivity of the objective function with regards to the control $q$ are calculated as 
	\begin{align}
J^q=\sum_{k=1}^{n}J_k^q+\frac{\Delta t}{m}(F_{k-1}^q)^T\mu_k+\frac{\Delta t^2}{m}(F_{k-1}^q)^T\delta_k\text{.}
\label{Eq:Sensitivity}
\end{align}
\end{theorem}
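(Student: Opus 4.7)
The plan is to apply the discrete adjoint technique: introduce Lagrange multiplier sequences $\{\mu_k\}$ and $\{\delta_k\}$ for the two update equations in \eqref{Eq12OriginalSymplecticEuler} and form the augmented functional
\begin{equation*}
\Lag = \sum_{k=1}^{n} J_k + \sum_{k=1}^{n}\mu_k^T\!\left[\uv_k - \uv_{k-1} - \tfrac{\Delta t}{m}F(\uv_{k-1},\x_{k-1},q)\right] + \sum_{k=1}^{n}\delta_k^T\!\left[\x_k-\x_{k-1}-\Delta t\,\uv_k\right].
\end{equation*}
Because every trajectory of \eqref{Eq12OriginalSymplecticEuler} annihilates both bracketed expressions, $\Lag\equiv J$ along the state, so $dJ/dq = d\Lag/dq$ for any choice of multipliers. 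I would therefore fix $\mu_k$ and $\delta_k$ so that all contributions in $d\Lag/dq$ multiplying the (unknown and costly) sensitivities $d\uv_k/dq$ and $d\x_k/dq$ vanish.

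Next I would substitute the velocity update into the position update, making the implicit dependence of $\x_k$ on $q$ through $F$ explicit; this is the mechanism by which the $\Delta t^2/m$ terms in the recursion arise. Setting the coefficient of $d\x_k/dq$ to zero for each interior $k$, and using the Jacobians $(F^{\x_k})^T\mu_{k+1}$ and $(F^{\x_k})^T\delta_{k+1}$, yields the stated recursion for $\delta_k$; setting the coefficient of $d\uv_k/dq$ to zero yields the recursion for $\mu_k$, in which the direct dependence $\x_k=\x_{k-1}+\Delta t\,\uv_k$ produces a linear $\delta_{k+1}$ contribution while the indirect dependence through $F$ produces the $(F^{\uv_k})^T\delta_{k+1}$ contribution. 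At the terminal step $k=n$ there is no $(k{+}1)$-term, so stationarity collapses to the starting data $\mu_n=(J_n^{\uv_n})^T$ and $\delta_n=(J_n^{\x_n})^T$ which seed the backward sweep.

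Finally, with the state-sensitivity coefficients eliminated, only the explicit $q$-dependencies of the $J_k$ and of $F(\cdot,\cdot,q)$ survive in $d\Lag/dq$; collecting them term by term reproduces \eqref{Eq:Sensitivity}. The principal bookkeeping difficulty — and the only real obstacle — is keeping consistent track of the time indices at which each Jacobian of $F$ is evaluated once the velocity update has been substituted into the position update, so that the contributions of the form $(F^{\uv_k})^T\mu_{k+1}$, $(F^{\uv_k})^T\delta_{k+1}$, $(F^{\x_k})^T\mu_{k+1}$ and $(F^{\x_k})^T\delta_{k+1}$ are assigned to the correct recursion with the correct prefactor in $\Delta t$ and $m$. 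After this accounting, the result follows by reading off the stationarity conditions and applying an elementary telescoping argument to the remaining terms in $d\Lag/dq$.
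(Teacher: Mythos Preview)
Your plan is correct and coincides with the paper's argument: the paper likewise substitutes the velocity update into the position update (so that both iterates depend only on $(\uv_{k-1},\x_{k-1},q)$), forms the multiplier-weighted sum of the resulting constraints together with $J$, differentiates in $q$, shifts indices, and reads off the backward recursions and terminal data from the coefficients of $\uv_k^q$ and $\x_k^q$. The only cosmetic difference is ordering—you write the Lagrangian with the original position constraint and then substitute, whereas the paper substitutes first—but after the substitution the computations are identical.
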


\begin{proof}
	For simplicity we will rewrite the symplectic Euler by substitution of $u_k$ in (\ref{Eq12OriginalSymplecticEuler})
	\begin{equation}
	\begin{aligned}
	\uv_{k}&=\uv_{k-1}+\frac{\Delta t}{m}F(\uv_{k-1},\x_{k-1},q)\\
	\x_{k}&=\x_{k-1}+\Delta t\left(\uv_{k-1}+\frac{\Delta t}{m}F(\uv_{k-1},\x_{k-1},q)\right)\text{.}
	\label{Eq13RewrittenSymplecticEuler}
	\end{aligned}			
	\end{equation}
	The idea is related to \cite{Schulz1996} for adjoint calculations of iteration equations.
We first sum up the products of iterates and unknown multipliers together with the objective (\ref{Eq:ObjAdj})
		\begin{equation}
		\begin{aligned}
		0=&\sum_{k=1}^{n}\left[	\uv_{k}-\uv_{k-1}-\frac{\Delta t}{m}F(\uv_{k-1},\x_{k-1},q)\right]^T\mu_k\\
		+&\sum_{k=1}^{n}\left[\x_{k}-\x_{k-1}+\Delta t\left(\uv_{k-1}-\frac{\Delta t}{m}F(\uv_{k-1},\x_{k-1},q)\right)\right]^T\delta_k\\
		+&J-\sum_{k=1}^{n}J_k\left(\uv_k,\x_k,q\right)\text{.}
		\end{aligned}
\end{equation}
	Differentiating w.r.t. the control leads to
	\begin{equation}
\begin{aligned}
0=&\sum_{k=1}^{n}\left[	\uv_{k}^q-\uv_{k-1}^q-\frac{\Delta t}{m}F^{\uv_{k-1}}\uv_{k-1}^q-\frac{\Delta t}{m}F^{\x_{k-1}}\x_{k-1}^q-\frac{\Delta t}{m}F_{k-1}^q)\right]^T\mu_k\\
+&\sum_{k=1}^{n}\left[	\x_{k}^q-\x_{k-1}^q+\Delta t\left(\uv_{k-1}^q-\frac{\Delta t}{m}F^{\uv_{k-1}}\uv_{k-1}^q-\frac{\Delta t}{m}F^{\x_{k-1}}\x_{k-1}^q-\frac{\Delta t}{m}F_{k-1}^q)\right)\right]^T\delta_k\\
+&J^q-\sum_{k=1}^{n}\left(J_k^{\uv_k}\right)^T\uv_k^q+\left(J_k^{\x_k}\right)^T\x_k^q+J_k^q\text{.}
\end{aligned}
\end{equation}
	Using $\x_0^q=\uv_0^q=0$, shifting indices in sums and reorder terms for the unknowns $\x_k^q,\uv_k^q$ for $k\in\{1,...,n\}$
		\begin{equation}
\begin{aligned}
0=&\sum_{k=1}^{n-1}\left[\mu_k-\mu_{k+1}-\frac{\Delta t}{m}(F^{\uv_k})^T\mu_{k+1}-\frac{\Delta t}{m}\delta_{k+1}-\frac{\Delta t^2}{m}(F^{\uv_k})^T\delta_{k+1}-J_k^{\uv_k}\right]^T\uv_k^q\\
+&\sum_{k=1}^{n-1}\left[-\frac{\Delta t}{m}(F^{\x_k})^T\mu_{k+1}+\delta_k-\delta_{k+1}-\frac{\Delta t^2}{m}(F^{\x_k})\delta_{k+1}-J_k^{\x_k}\right]^T\x_k^q\\
+&\left[\mu_n-J_n^{\uv_n}\right]^T\uv_n^q+\left[\delta_n-J_n^{\x_n}\right]^T\x_n^q\\
+&J^q-\sum_{k=1}^{n}J_k^q-\frac{\Delta t}{m}(F_{k-1}^q)^T\mu_k-\frac{\Delta t^2}{m}(F_{k-1}^q)^T\delta_k\text{.}
\end{aligned}
\end{equation}
	The recursion is finally obtained from the coefficients before the unknown partial derivatives, the final conditions follow from the third line. The sensitivity of the objective function with regards to the control $q$ follows from the last line.
\end{proof}

\begin{remark}
  	In this paper we will be concerned with a total of $N$ particles in two dimensions such that $\x_k,\uv_k\in\R^{2N}$. The recursion (\ref{Eq:AdjointRecursion}) can be written as
	\begin{equation}
\begin{aligned}
\mu_k&=\left(\textbf{I}_{2N}+\frac{\Delta t}{m}F^{\uv_k}\right)^T\left(\mu_{k+1}+\Delta t\delta_{k+1}\right)+\left(J_k^{\uv_k}\right)^T\\
\delta_k&=\delta_{k+1}+\left(\frac{\Delta t}{m}F^{\x_k}\right)^T\left(\mu_{k+1}+\Delta t\delta_{k+1}\right)+(J_k^{\x_k})^T\text{.}
\end{aligned}
\label{Eq24NParticlesDDimAdjoint}			
\end{equation}
	In this form obtained adjoints equal the ones that have been derived earlier by different means \cite{McNamara2004,Wojtan2006,Schneider2008}.
\end{remark}
\begin{remark}
	In order to control a particle system, we only require the ability to calculate the matrix with partial derivatives of the forces. Further sensitivity calculations need to specify the particle system such as control $q$. We will restrict to the aforementioned SPH fluids in the following section, where the shape of an obstacle is optimized. The sensitivity is obtained in Section \ref{SubSec:DerSDSPH} by evaluating the discrete shape derivative $DJ(\Om)[\Vv_l]$ for domain $\Om$ in direction $V_l$ via formula (\ref{Eq:Sensitivity}).
\end{remark}

\section{Smoothed Particle Hydrodynamics}
\label{Sec:SPH}
Before being concerned with optimization of the SPH flows, Section \ref{SubSec:SPHBasics} will discuss the basic idea, definitions and notations for this technique, while Section \ref{SubSec:SPHNumBoundary} is dealing with boundary interactions of SPH particles, that are crucial in shape sensitivity calculations.
\subsection{Basics of SPH}
\label{SubSec:SPHBasics}
Central for SPH-particles travelling on a domain $\Om\subset\R^d$ is the approximation of the delta distribution by the usage of kernels. Hence, a field value $v:\Om\rightarrow\R$ is written using the symbolical Dirac-delta identity and the Dirac-delta condition \cite{Monaghan2002}
\begin{align}
v(x)=\int v(x')\delta(x-x')\diff x'=\lim_{h\rightarrow0}\int v(x')W(x-x',h)\diff x'\text{,}
\end{align}
where $W:\R^d\times\R_+\rightarrow\R$ is defined as a valid kernel if fulfilling properties of normalization, Dirac-delta limiting such as often positivity, symmetry and compactness of the support \cite{Monaghan2002}.
In turn this is intended to serve as limiting object in the sense of
\begin{align}\label{Eq:SPHIdea}
\lim_{h\rightarrow0}\int v(x')W(x-x',h)\diff x'=\lim_{h\rightarrow0}\lim_{N\rightarrow\infty}\sum_{i=1}^{N}m^i\frac{v^i}{\rho^i}W(x-x^i,h)
\end{align}
for constant particle density $\rho^i$ and particle mass $m^i$.
Ultimately, the latter is approximated for fixed $h>0$ and $N\in\N$, forming the basis of SPH-techniques.
In the literature various valid kernel functions are defined, we restrict ourselves here to the original one, hence define the Gaussian kernel as \cite{Monaghan1985}
	\begin{align}
	W(x-x^i,h)=\sigma_g\exp\left[-\frac{||x-x^i||^2}{h^2}\right]
	\label{Eq:GaussianKernel}
	\end{align}
	e.g. with two-dimensional normalization as
	\begin{equation}
			\begin{aligned}
		\sigma_g&=\frac{1}{\pi h^2}
		.
		\end{aligned}
	\end{equation}

\begin{remark}
	The Gaussian kernel comes with the advantage of belonging to class $C^\infty$, we will use this fact in Section \ref{Sec:SOSPH}.
	However, it lacks compact support, which results in a summation over all particles in the domain. To counter this, often cut-off kernels as the Cubic Spline \cite{Monaghan2002} or Poly6 \cite{Muller2003} kernel are used. In classical form these kernels are not differentiable at the cut-off, but are attractive from computational side, as the summation is only performed over particles in the limited support radius.
\end{remark}
\begin{remark}
	A limited number of publications have been dealing with the convergence of SPH methods, based on joint particle limits $N\rightarrow\infty$ and smoothing limit $h\rightarrow0$ as in (\ref{Eq:SPHIdea}). First attempts relied on spatial discretizations with time-continuous approximations \cite{Raviart1985} or at least the knowledge about exact particle trajectories \cite{Moussa2000}. In addition, convergence results have been presented for consecutive limits of discretization parameter and smoothing radius \cite{Dilisio1998} or a selective choice of kernel functions \cite{Oelschlaeger1991}. 
\end{remark}

\subsection{Particle Boundary Interactions}
\label{SubSec:SPHNumBoundary}
Since we are ultimately interested in shape optimization of particle systems, we need to specify the particle boundary interaction. In classical SPH methods, this is typically done via boundary particles \cite{Monaghan2002}. The obvious drawback are increased computational efforts, undesired boundary frictions such as the inability to model complex geometries \cite{Koschier2017}.\\
All SPH boundary techniques have the common idea to approximate the second integral below, which arises naturally when being restricted to a bounded domain $\Om\subset D\subset\R^d$
\begin{align}
\rho(x)&=\lim_{h\rightarrow 0}\int_{\Om}W(x-x',h)\rho(x')\diff x'+\lim_{h\rightarrow 0}\int_{D\setminus\Om} W(x-x',h)\rho(x')\diff x'\text{.}
\label{Eq:FluidBoundarySplit}
\end{align}
In the last decade various researchers addressed this problem e.g. by the usage of continuous boundary methods such as the boundary surface integral \cite{Kulsegaram2004} or the signed distance field method \cite{Koschier2017}. This work will restrict to latter ideas and extends them for Lagrangian SWE with rigid and outflow boundary conditions. In \cite{Koschier2017} the boundary density portion $\rho_D$ is approximated as
\begin{align}
\rho_D(x)&=\lim_{h\rightarrow 0}\int_{D\setminus\Om}\gamma(d_\Om(x')) W(x-x',h)\diff x'\text{,}
\label{Eq:SPHBoundaryContr}
\end{align}
where a modification of the signed distance function 
\begin{align}
d_{\Om}(x)=\begin{cases}
	d(x,\partial\Om)\quad &\text{ if } x\in \Om\\
	0 \quad &\text{ if } x\in \partial\Om
	\\
	-d(x,\partial\Om)\quad &\text{ if } x\in D\setminus\Om
\end{cases}
\label{Eq:SPHSDF}
\end{align}
is used with
\begin{align}
\gamma(d_\Om)=\begin{cases}
\rho_0(1-\frac{d_\Om}{h}) &\text{ if }d_\Om\leq h\\
0 &\text{ otherwise}
\end{cases}
\label{Eq:ModSDF}
\end{align}
for reference density $\rho_0$. 
In addition to rigid boundaries, we implement open-sea boundaries. Due to this reason, we extend the idea of buffered layers for the modelling of outflow conditions, as introduced in \cite{Vacondio2012} for SPH-based computations, to mesh-based signed distance maps. The idea is to create a collecting channel, in which the particle movement is decelerated. Suppose an additional layer as subdomain $\Om_L\subset\R^d$ is introduced, that builds with $\Om$ a conformal domain. Then as in (\ref{Eq:SPHSDF}), we require
\begin{align}
d_{\Om_L}(x)=\begin{cases}
d(x,\partial\Om_L)\quad &\text{ if } x\in (\Om\cup D)\\
0 \quad &\text{ if } x\in \partial\Om
\\
-d(x,\partial\Om_L)\quad &\text{ if } x\in \Om_L\setminus(\Om\cup D).
\end{cases}
\label{Eq:SPHSDF1}
\end{align}
Hence, outflow boundary conditions are created via subdomain-dependent modification of this very signed distance function, i.e.
\begin{align}
\gamma^L(d_{\Om_L})=
\begin{cases}
\rho_0d_{\Om_L} &\text{ if } x\in\Om_L\\
0 &\text{ otherwise}.
\end{cases}
\label{Eq:ModSDF1}
\end{align}
In addition, we add a decelerating domain-dependent coefficient for the velocity in the particles iteration law of (\ref{Eq12OriginalSymplecticEuler}), i.e. for $\eps>0$
\begin{align}
\phi:=\begin{cases}
\phi_1=1 \text{ in } \Om\cup D\\
\phi_2=\eps \text{ in } \Om_L\text{.}
\end{cases}
\end{align}
The implementation in the discretized setting is done  via interpolation for each particle and its position $\{\x^i_k\}_{i,k=1}^{N,n}$ on a finite element mesh, e.g. consisting of Lagrangian elements $\kappa$, with $S=\dim(\mathbb{P}_p)=\frac{(d+p)!}{d!p!}$ nodal degrees of freedom for $d$ dimensions, for polynomial space $P:=\mathbb{P}_p$ of polynomial order $p\in\mathbb{N}$ with associated shape functions $\{N_1^p,...,N_{S}^p\}$, i.e. 
\begin{align}
\rho_{D}(x)=\sum_{s=1}^{S}m\gamma_{s}(d_\Om)N_s^p(x)+\sum_{s=1}^{S}m\gamma^L_{s}(d_{\Om_L})N_s^p(x)\text{.}
\label{Eq:BoundaryFI}
\end{align}
The interpolation technique is then also used to compute the boundary forces by
\begin{align}
F_{D}^{Height}(x)=\nab\rho_{D}(x)=\sum_{s=1}^{S}\gamma_{s}(d_\Om)\nab N_s^p(x)+\sum_{s=1}^{S}\gamma^L_{s}(d_{\Om_L})\nab N_s^p(x).
\label{Eq:ParticleBoundaryPressure}
\end{align}

\begin{remark}
	The advantage in using this boundary representation lies in the possibility to precompute a solution field, allowing cheap finite element interpolations, whenever a particle is in the proximity of the boundary.\\
	 Later on the sediment field $z:\Om\rightarrow\R$ of (\ref{Eq13:Lagconsmom}) can be created in the same manner as the boundary density. In this setting, we can naturally identify domain boundaries via increased sediment elevations.
\end{remark}
\begin{remark}
	Rigid boundary contributions, as in (\ref{Eq:SPHBoundaryContr}), rely on the max-function, i.e. 
	\begin{align*}
	\gamma(d_\Om(x))=\rho_0\left(\max\left\{0,1-\frac{d_\Om(x)}{h}\right\}\right),
	\end{align*}
	since this is not differentiable for $x\in\Om$ such that $d_\Om(x)=h$, we will use a smoothed $\max$-function $C^1(\Om)\ni\max_\alpha:\Om\rightarrow\R$ in the following section e.g. as \cite{Christof2017}
	\begin{align}
	\text{max}_\alpha(d_\Om)=\begin{cases}
	\max(0,1-\frac{d_\Om}{h}) &\text{for } 1-\frac{d_\Om}{h}\in\R\setminus\left[-\frac{1}{\alpha},\frac{1}{\alpha}\right]\\
	\frac{\alpha}{4}(1-\frac{d_\Om}{h})^2+\frac{1}{2}(1-\frac{d_\Om}{h})+\frac{1}{4\alpha}&\text{otherwise}
	\end{cases}
	\label{Eq:Smoothed}
	\end{align}
	for $\alpha>0$ with derivative as
	\begin{align}
	\text{max}'_\alpha(d_\Om)=\begin{cases}
	0 &\text{for } 1-\frac{d_\Om}{h}\in\left(-\infty,-\frac{1}{\alpha}\right)\\
	-\frac{\alpha}{2h}(1-\frac{d_\Om}{h})-\frac{1}{2h}&\text{for } 1-\frac{d_\Om}{h}\in\left[-\frac{1}{\alpha},\frac{1}{\alpha}\right]
	\\
	-\frac{1}{h}&\text{for } 1-\frac{d_\Om}{h}\in\left(\frac{1}{\alpha},\infty\right)\text{.}
	\end{cases}
	\end{align}
	For outflow boundaries we rely on an analogous $C^1$-counterpart of the $\min$-function as 
		\begin{align}
	\text{min}_\alpha(d_{\Om_L})=\begin{cases}
	\min(0,d_{\Om_L}) &\text{for } d_{\Om_L}\in\R\setminus\left[-\frac{1}{\alpha},\frac{1}{\alpha}\right]\\
	-\frac{\alpha}{4}d_{\Om_L}^2+\frac{1}{2}d_{\Om_L}-\frac{1}{4\alpha}&\text{otherwise}
	\end{cases}
	\label{Eq:SmoothedMin}
	\end{align}
\end{remark}
with
	\begin{align}
\text{min}'_\alpha(d_\Om)=\begin{cases}
1 &\text{for } d_{\Om_L}\in\left(-\infty,-\frac{1}{\alpha}\right)\\
-\frac{\alpha}{2}(d_{\Om_L})+\frac{1}{2}&\text{for } d_{\Om_L}\in\left[-\frac{1}{\alpha},\frac{1}{\alpha}\right]
\\
0&\text{for } d_{\Om_L}\in\left(\frac{1}{\alpha},\infty\right)\text{.}
\end{cases}
\end{align}

\section{Adjoint-Based Shape Optimization for SPH Particles}
\label{Sec:SOSPH}
The following section is devoted to the derivation of the shape derivative of an SPH-fluid with suitable boundary interaction for a model described in Section \ref{Sec:MoFo}. For this, necessary definitions and notations are recalled in Section \ref{SubSec:BasicsShapeOpt} and applied in Section \ref{SubSec:DerSDSPH}. A numerical verification of results for two test cases follows in Section \ref{Sec:NumericalExamplesSDSPH}. 

\subsection{Model Formulation}
\label{Sec:MoFo}
Suppose we are given an open domain $\Om\subset\R^2$, which is split into disjoint subdomains, consisting of a connected, interior domain $\Om_1$ such that $\Ga\cup\Gb\cup\Gc:=\Gamma:=\partial\Om_1$, a simply connected obstacle domain $D$ and an exterior domain $\Om_2\cup\Om_3:=\Om\setminus\bar{\Om}_1\setminus D$, such that $\bar{\Om}_1\cup \Om_2\cup\Om_3\cup D=\Om$. 
We assume the variable, interior boundary $\Gamma_3$ and the fixed outer $\Ga\cup\Gb$ of interior domain $\Om_1$ to be at least Lipschitz. One simple example of such kind is visualized below in Figure \ref{fig:particledomain}.

\begin{figure}[h]
	\centering
	\begin{tikzpicture}
\draw[->, >=stealth', shorten >=1pt] (2.33,0.7)   -- (2.0,1.1);
\draw[dashed] (0,0) --(5,0);
\draw[dashed] (0,0) --(0,2.5);
\draw[dashed] (5,0) --(5,2.5);
\draw[dashed] (0,2.5) --(5,2.5);
\draw[dashed] (2.5,0.5) circle (0.25);
\def\rk{-0.12}

\def\ra{-0.6}
\def\rb{5.6}
\def\rc{3.1}
\draw[dashed] (\ra,\ra) -- (0,0);
\draw[dashed] (5,0) -- (\rb,\ra);
\draw (\ra,\ra) --(\rb,\ra);
\draw (\ra,\ra) --(\ra,\rc);
\draw (\rb,\ra) --(\rb,\rc);
\draw (\ra,\rc) --(\rb,\rc);

\node (A) at (3.5,1.2) {\large $\Om_1$};
\node (B) at (2.5,-0.3) {$\Ga$};
\node (E) at (2.5,2.7) {$\Gb$};
\node (F) at (3,0.5) {$\Gc$};
\node (G) at (2.5,0.5) {$D$};
\node (H) at (2.1,0.7) {$\nv$};
\node (I) at (5.3,1.2) {$\Om_2$};
\node (J) at (4.5,-0.3) {$\Om_3$};
\end{tikzpicture}
	\caption[Illustrative Domain]{Illustrative Domain $\Om$ with Initial Circled Obstacle $D$, Interior Domain $\Om_1$, Exterior Domains $\Om_2,\Om_3$ and Boundaries $\Ga,\Gb,\Gc$}
	\label{fig:particledomain}
\end{figure}
On this domain fluids will follow the laws of the Lagrangian viscous SWE, i.e. on $\Om\times(0,T)$ we have
	\begin{align}
DH&=-H\nab\cdot \Qv\label{Eq12:Lagconsmass}\\
D\Qv&=-g\nab(H+z)+\mu\nab^2\Qv.\label{Eq13:Lagconsmom}
	\end{align}
subject to rigid boundary conditions on $\Gamma_1,\Gamma_3\times(0,T)$, open boundary conditions on $\Gamma_2\times(0,T)$ and suitable initial conditions on $\Om\times\{0\}$ for solution $U:\Om\times(0,T)\rightarrow\R\times\R^2$, where for simplicity the domain and time-dependent components are denoted by $U=(H,\Qv)=(H,Hu,Hv)$, with $H$ being the water height and $Hu,Hv$ the weighted horizontal and vertical discharge or velocity. The sediment is a scalar field $z:\Om\rightarrow\R$, $g$ the gravitational acceleration and $\mu>0$ the viscosity weight .\\
Our objective $J:\Om\rightarrow\R$ consists of three parts
\begin{align}
    J(\Om)=J_1(\Om)+J_2(\Om)+J_3(\Om)
    .
\end{align}
We interpret $\Ga$ as coastline and implement a tracking-type objective \cite{Reyes2011} for the rest height of the water, i.e.
\begin{align}
J_1(\Om)=\int_0^T\int_{\Ga}\frac{1}{2}(H-\bar{H})^2\diff s \diff t\text{.}
\label{Eq:Objective_Height}
\end{align}
The objective is accompanied by a volume penalty as a second part
\begin{align}
J_2(\Om)=\nu_1\int_{D}1\diff x
\label{Eq:1Vol}
\end{align}
and a perimeter regularization
\begin{align}
J_3(\Om)=\nu_2\int_{\Gamma_{3}}1\diff s
.
\label{Eq:1Peri}
\end{align}
The contribution of the penalty terms (\ref{Eq:1Vol}) and (\ref{Eq:1Peri}) is controlled by parameters $\nu_1,\nu_2>0$, which need to be defined a priori. Finally, with constraints of type (\ref{Eq12:Lagconsmass})-(\ref{Eq13:Lagconsmom}) we obtain a PDE-constrained optimization problem, which is intended to be solved in a discretize-then-differentiate setting. The discretization is based on SPH as described in Section \ref{SubSec:SPHBasics} with boundary interaction as in Section \ref{SubSec:SPHNumBoundary}. From this we obtain a particle system, which provides us in two dimensions with states via tuple $(x_k,v_k)\in\R^{2N}\times\R^{2N}$. Here we interpret $\Om_1$ as fluid and $\Om_2\cup\Om_3\cup D$ as boundary domain for the particles in reference to Section \ref{SubSec:SPHNumBoundary}.
Based on this we define the discrete counterpart to objective (\ref{Eq:Objective_Height}), i.e.
\begin{equation}
\begin{aligned}
J_{1,h}(\Om)&=\sum_{k=1}^{n}J_{1,h,k}(x_k,v_k,\Om)\\
&=\sum_{(k,x)\in\{1,...,n\}\times\Gamma_{1,h}}\frac{1}{2}\Delta t\left(\frac{\rho(x)}{\rho_0}-\bar{H}(x)\right)^2\\&=\sum_{(k,x,j)\in\{1,...,n\}\times\Gamma_{1,h}\times\{1,...,N\}}\frac{1}{2}\Delta t\left(\frac{m}{\rho_0}W(x-x^j_k,h)-\bar{H}(x)\right)^2\text{.}
\label{Eq:ParticleObjective}
\end{aligned}
\end{equation}
where $x_k^j$ defines the position of particle $j$ at time $k$, that is obtained as solution to an SPH-discretization of the Lagrangian SWE. 
 As it can be seen in (\ref{Eq:SPHIdea}) SPH methods allow for simplified density calculations, to utilize this benefit a density-water relation is used \cite{Solenthaler2011}, i.e.
\begin{align}
H=\frac{\rho}{\rho_0}\label{Eq:DensityWater}
\end{align}
for $\rho_0>0$.
This definition gives only local conservation results, such that (\ref{Eq12:Lagconsmass}) and (\ref{Eq13:Lagconsmom}) hold only in the limit case of $h\rightarrow0$, which is shown in \cite{Solenthaler2011} for inviscid Lagrangian SWE. Furthermore, for constant particle number and mass the continuity equation (\ref{Eq12:Lagconsmass}) is automatically fulfilled by reinterpretation of the density as height of the waves \cite{Solenthaler2011}. In this setting (\ref{Eq13:Lagconsmom}) reduces to regular SPH calculations for incompressible Navier-Stokes equations \cite{Monaghan1985} with pressure and viscosity corresponding forcing terms such as an additional term that arises from variations in the sediment.
\begin{remark}
By relying on (\ref{Eq:DensityWater}) we circumvent the additional carry of information by the individual particles and instead relying on density-based calculation in (\ref{Eq:ParticleObjective}).\\
	In (\ref{Eq:ParticleObjective}) and in what follows we are relying on a constant particle mass, in line with Section \ref{Sec:AdjointParticleSystems}.
	Furthermore, we omit discretization indices for readability whenever it is clear from the context.
\end{remark}

\subsection{Basics of Shape Optimization}
\label{SubSec:BasicsShapeOpt}
In this section we introduce methodologies commonly used in shape optimization, extensively elaborated in monographs \cite{Choi1987}\cite{Sokolowski1992}\cite{Delfour2011}. We hereby mainly follow \cite{Berggren2010}. We start by introducing a family of mappings $\{\phi_\eps\}_{\eps\in[0,\tau]}$ for $\tau>0$ that are used to map each current position $\x\in\Om$ to another by $\phi_\eps(\x)$, where we choose the sufficiently smooth vector field $\Vv$ as the direction for the so-called perturbation of identity
\begin{align}
\x_\eps=\phi_\eps(\x)=\x+\eps \Vv(\x)\text{.}
\label{Eq:4poi}
\end{align}
According to this methodology, we can map the whole domain $\Om$ to another $\Omeps$ such that 
\begin{align}
\Omeps=\{\x_\eps|x+\eps \Vv(x),x\in\Om\}\text{.}
\label{Eq:5domain}
\end{align}
We define the Eulerian Derivative as 
\begin{align}
DJ(\Om)[\Vv]=\lim_{\eps\rightarrow 0^+} \frac{J(\Om_\eps)-J(\Om)}{\eps}\text{.}\label{Eq:7EulerDer}
\end{align}
Commonly, this expression is called shape derivative of $J$ at $\Om$ in direction $\Vv$ and in this sense $J$ shape differentiable at $\Om$ if for all directions $\Vv$ the Eulerian derivative exists and the mapping $\Vv\mapsto DJ(\Om)[\Vv]$ is linear and continuous.
In addition, we define the material derivative of some scalar function $p:\Om\rightarrow\R$ at $x\in\Om$ by the derivative of a composed function $p_\eps\circ\phi_\eps:\Om\rightarrow\Om_\eps\rightarrow\R$ for $p_\eps:\Om_\eps\rightarrow\R$ as
\begin{align}
D_m p(x):=\lim_{\eps\rightarrow0^+}\frac{p_\eps\circ \phi_\eps(x)-p(x)}{\eps}=\frac{d}{d\eps}\restr{(p_\eps\circ \phi_\eps)(x)}{\eps=0^+}\label{Eq:8MatDer}
\end{align} 
and the corresponding shape derivative for a scalar $p$ as
\begin{align}
D p[\Vv]:=D_mp-\Vv\cdot\nab p \label{Eq:9MatDer2}
.
\end{align}
In Section \ref{SubSec:DerSDSPH} we will need the product rule, i.e. \cite{Berggren2010}
\begin{align}
D_m(pq)&=D_mpq+pD_mq\label{Eq:10MatProdR}
\end{align}
and the fact that material derivatives do not commute with spatial derivatives \cite{Berggren2010}
\begin{align}
D_m\nab p&=\nab D_mp-\nab \Vv^T\nab p\label{Eq:11MatGradR}
.
\end{align}

As mentioned in Section \ref{Sec:MoFo} we are relying on a discretize-then-differentiate approach. Hence,  the perturbation of identity w.r.t. a single vertex perturbation $\delta x_l$ for vertices $l\in\{1,...,L\}$ is defined as \cite{Berggren2010}
	\begin{equation}
	\begin{aligned}
	x_\eps&=\phi_\eps(x)\\
	&=x+\eps\delta x_lN_l^1(x)\\
	&=x+\eps\Vv_l(x)\text{,}
	\label{Eq:DiscretizedPerturbationofIdentity}
	\end{aligned}
	\end{equation}
where $N_l^1$ is a finite-element basis function of first order, i.e. a continuous piecewise-linear polynomial. 
\begin{remark}
	The ultimate position $x_\eps\in\Om_\eps$ can be calculated using summed vertex contributions, i.e. 
	\begin{align}
	x_\eps=x+\eps\sum_{l=1}^L\delta x_lN_l^1(x)\text{.}
	\end{align}
	The discretized perturbation of identity interpolates the deformation $\eps\delta x_l$ of vertex $l\in\{1,...,L\}$ on the support of $N^1_l$. 
\end{remark}

\subsection{Adjoint-Based Shape Optimization for SPH Particles}
\label{SubSec:DerSDSPH}
The sensitivity with respect to domain deformations is obtained from Theorem \ref{Theo:AdjointsParticle} as
\begin{equation}
	\begin{aligned}
	DJ_1(\Om)[\Vv_l]&=\restr{\frac{dJ_{1,h}(\Om_\eps)}{d\eps}}{\eps=0^+}\\
	&=\frac{d}{d\eps}\restr{\sum_{k=1}^{n}J_{1,h,k}(x_k,v_k,\Om_\eps)}{\eps=0^+}+\frac{d}{d\eps}\restr{\sum_{k=1}^n(F_{k-1}^{\eps})^T\left(\frac{\Delta t}{m}\mu_k+\frac{\Delta t^2}{m}\delta_k\right)}{\eps=0^+}\text{,}
	\end{aligned}
\end{equation}
where occurring functions are defined on the perturbed domain $\Om_\eps\subset \R^d$.
In need of deriving the shape derivative, we first state the following lemma, following \cite{Berggren2010}.
\begin{lemma}\label{Lemma:ParticleSD}
	For a finite element function $g$, which is an element of a conformal approximation space of $H^1(\Om)$, i.e. $g(x)=\sum_{s=1}^Sg_sN_s^p(x)$ for finite element ansatz function $N_s^p$, whose restriction on an Lagrangian element $\kappa$ is a polynomial of order $p\geq1$, the shape derivative is derived as 
	\begin{align}
	D g[\Vv_l]=\sum_{s=1}^{S}(D g_s[\Vv_l]N_s^p-g_s\Vv_l\cdot\nab N_s^p)\text{,}
	\end{align}
	where $N_s^p(x_\eps)=N_s^p(\tau_\eps^{-1}(x_\eps))$ is moving along the deformation.
\end{lemma}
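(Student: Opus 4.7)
The plan is to reduce the claim to the material-derivative product rule (\ref{Eq:10MatProdR}) together with the shape-material conversion (\ref{Eq:9MatDer2}). First I would make the two sources of $\eps$-dependence in $g_\eps(x_\eps)=\sum_{s=1}^{S} g_{s,\eps}N_{s,\eps}^p(x_\eps)$ explicit: the nodal values $g_{s,\eps}$, which depend on $\eps$ but not on $x$, and the transported basis functions $N_{s,\eps}^p$. The clause ``$N_s^p(x_\eps)=N_s^p(\tau_\eps^{-1}(x_\eps))$'' in the statement prescribes the Lagrangian transport $N_{s,\eps}^p\circ\phi_\eps = N_s^p$, so the basis is dragged along with the mesh.

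Next, I would apply (\ref{Eq:10MatProdR}) term by term to obtain
\begin{equation*}
D_m g = \sum_{s=1}^{S}\bigl(D_m g_s\,N_s^p + g_s\,D_m N_s^p\bigr).
\end{equation*}
The second contribution vanishes: since $N_{s,\eps}^p\circ\phi_\eps = N_s^p$ is independent of $\eps$, differentiation at $\eps=0^+$ gives $D_m N_s^p = 0$. The coefficients $g_s$ are real numbers rather than spatial fields, so $\nabla g_s = 0$ and (\ref{Eq:9MatDer2}) reduces to $D_m g_s = D g_s[\Vv_l]$. Collecting these pieces,
\begin{equation*}
D_m g = \sum_{s=1}^{S} D g_s[\Vv_l]\,N_s^p.
\end{equation*}

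Finally, applying the conversion (\ref{Eq:9MatDer2}) to $g$ itself and substituting $\nabla g = \sum_s g_s \nabla N_s^p$ (once more because $g_s$ carries no spatial dependence) yields
\begin{equation*}
D g[\Vv_l] = D_m g - \Vv_l\cdot\nabla g = \sum_{s=1}^{S}\bigl(D g_s[\Vv_l]\,N_s^p - g_s\,\Vv_l\cdot\nabla N_s^p\bigr),
\end{equation*}
which is the claim. The main obstacle is conceptual rather than computational: justifying $D_m N_s^p = 0$ relies on interpreting the discretized perturbation of identity (\ref{Eq:DiscretizedPerturbationofIdentity}) as a genuinely Lagrangian mesh motion that transports basis functions along, and on the convention that $g_{s,\eps}$ designate the same nodal degrees of freedom before and after deformation. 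The hypothesis $p\geq 1$ ensures that $\nabla N_s^p$ exists in the piecewise sense so that all manipulations stay within the $H^1(\Om)$-conformal framework.
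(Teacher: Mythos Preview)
Your proof is correct and follows essentially the same route as the paper: both arguments hinge on $D_m N_s^p=0$ from the Lagrangian transport of the basis, the spatial constancy of the nodal values $g_s$ to identify $D_m g_s=D g_s[\Vv_l]$, and the conversion (\ref{Eq:9MatDer2}) to pass from the material derivative to the shape derivative. Your version is slightly more explicit in invoking the product rule (\ref{Eq:10MatProdR}) as the starting point, but the logical skeleton is identical.
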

\begin{proof}\cite{Berggren2010}
	Since $N_s^p(x)$ is moving along the deformation, the material derivative vanishes 
	\begin{align*}
	D_m N_s^p=0\text{,}
	\end{align*}
	hence for the shape derivative it holds by (\ref{Eq:9MatDer2})
	\begin{align*}
	D N_s^p[\Vv_l]=-\Vv_l\cdot\nab N_s^p\text{.}
	\end{align*} 
	Since $g_s$ is spatially constant, we conclude according to \cite{Berggren2010}
	\begin{align*}
	D_m g=\sum_{s=1}^{S}D g_s[\Vv_l]N_s^p
	\label{Eq:ParticleMaterialDerivative}
	\end{align*}
	and therefore obtain the shape derivative as
	\begin{align*}
	D g[\Vv_l]=D_m g-\Vv_l\nab g=\sum_{s=1}^{S}\left(D g_s[\Vv_l]N_s^p-g_s\Vv_l\cdot \nab N_s^p\right)
	\end{align*}
\end{proof}
\begin{remark}
	For SPH boundary computations we recall the shape derivative of the signed distance function \cite{Allaire2016} for $x\notin\Sigma$
	\begin{align}
	D d_\Om(x)[\Vv]=-\Vv(p_{\partial\Om}(x))\cdot\nv(p_{\partial\Om}(x)).
	\end{align}
	with operator $p_{\partial\Om}$ that projects a point $x\in\Om$ onto its closest boundary, where $\Sigma$ is referred to as the ridge. In this sense, $D d_{\Om,s}[\Vv_l]$ is its nodal discretized counterpart w.r.t. the $l^{th}$ vertex perturbation.
\end{remark}

Since the boundary contribution is driven by surface gradient forces, we require the shape derivative of the gradient of a finite element function (\ref{Eq:ParticleBoundaryPressure}).
\begin{lemma}\label{Lemma:ParticleSDGrad}
	For the gradient of a finite element function, i.e. $\nab g(x)=\sum_{s=1}^{S}g_s\nab N_s^p(x)$ for finite element ansatz function $N_s^p$, whose restriction on an Lagrangian element $\kappa$ is a polynomial of order $p\geq2$, the shape derivative is derived as 
	\begin{equation}
	\begin{aligned}
	D (\nab g)[\Vv_l]=\sum_{s=1}^{S}D g_s[\Vv_l]\nab N_s^p-\left(\nab\Vv_l\right)^T\left(\sum_{s=1}^{S}g_s\nab N_s^p\right)-\nab\left(\sum_{s=1}^{S}g_s\nab N_s^p\right)\Vv_l\text{.}
	\end{aligned}
	\end{equation}
\end{lemma}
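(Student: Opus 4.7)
The plan is to apply the general shape-derivative identity to the vector field $\nabla g$ and then reduce it to nodal quantities using Lemma \ref{Lemma:ParticleSD} together with the non-commutation rule (\ref{Eq:11MatGradR}). Since $\nabla g$ is a vector, I would first extend the scalar identity (\ref{Eq:9MatDer2}) component-wise, which gives
\begin{align*}
D(\nabla g)[\Vv_l] \;=\; D_m(\nabla g) \;-\; \nab(\nab g)\,\Vv_l,
\end{align*}
where $\nab(\nab g)$ is the Hessian of $g$ acting as a matrix on $\Vv_l$. This is the step that forces the assumption $p \geq 2$: for $p=1$ the Hessian contains distributional jumps across element interfaces, so pointwise differentiation inside each Lagrangian element $\kappa$ would fail.

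Next, I would apply the non-commutation identity (\ref{Eq:11MatGradR}) with the choice $p = g$ to rewrite the material derivative of the gradient as
\begin{align*}
D_m(\nab g) \;=\; \nab D_m g \;-\; (\nab \Vv_l)^T \nab g.
\end{align*}
The term $D_m g$ is precisely the quantity computed inside the proof of Lemma \ref{Lemma:ParticleSD}, namely $D_m g = \sum_{s=1}^{S} Dg_s[\Vv_l]\,N_s^p$. Because each coefficient $Dg_s[\Vv_l]$ is spatially constant, taking the spatial gradient commutes trivially with the sum and yields
\begin{align*}
\nab D_m g \;=\; \sum_{s=1}^{S} Dg_s[\Vv_l]\,\nab N_s^p.
\end{align*}

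Finally, I would substitute the two expressions back into the formula for $D(\nabla g)[\Vv_l]$ and replace $\nab g$ by its finite-element expansion $\sum_s g_s \nab N_s^p$ in the two correction terms. Rearranging gives exactly the claimed
\begin{align*}
D(\nab g)[\Vv_l] \;=\; \sum_{s=1}^{S} Dg_s[\Vv_l]\,\nab N_s^p \;-\; (\nab \Vv_l)^T\!\left(\sum_{s=1}^{S} g_s\,\nab N_s^p\right) \;-\; \nab\!\left(\sum_{s=1}^{S} g_s\,\nab N_s^p\right)\Vv_l.
\end{align*}
The argument is a mechanical chain of the material-derivative calculus already assembled in Section \ref{SubSec:BasicsShapeOpt}, so I do not anticipate a serious obstacle; the only delicate point is the aforementioned regularity assumption $p\geq 2$, which has to be flagged so that $\nab(\nab g)$ makes sense elementwise and the extra Hessian term is genuinely present (for $p=1$ it would vanish inside each element and the formula would collapse, matching Lemma \ref{Lemma:ParticleSD} applied to the componentwise piecewise-constant object).
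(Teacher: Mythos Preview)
Your proposal is correct and follows essentially the same route as the paper's proof: apply the relation $D(\nabla g)[\Vv_l]=D_m(\nabla g)-\nabla(\nabla g)\Vv_l$, use the non-commutation rule (\ref{Eq:11MatGradR}) together with the identity $D_m g=\sum_s Dg_s[\Vv_l]N_s^p$ from Lemma~\ref{Lemma:ParticleSD}, and then substitute the finite-element expansion of $\nabla g$. The only cosmetic difference is the order in which the two decompositions are invoked; your additional remark on why $p\geq 2$ is required is a welcome clarification not spelled out in the paper.
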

\begin{proof}
	The material derivative does not commute with the spatial derivative (\ref{Eq:11MatGradR})
	\begin{align}
	D_m(\nab g)=\nab(D_m g)-\left(\nab \Vv_l\right)^T\nab g,
	\end{align}
	which equals by the same argument as in proof to Lemma \ref{Lemma:ParticleSD}
	\begin{align}
	D_m(\nab g)=\nab\left(\sum_{s=1}^{S}D g_s[\Vv_l]N_s^p\right)-\left(\nab \Vv_l\right)^T\left(\sum_{s=1}^{S}g_s\nab N_s^p\right)\text{.}
	\end{align}
	Then we have for the shape derivative
	\begin{equation}
	\begin{aligned}
	D (\nab g)[\Vv_l]&=D_m(\nab g)-\nab(\nab g)\Vv_l\\
	&=\nab\left(\sum_{s=1}^{S}D g_s[\Vv_l]N_s^p\right)- \left(\nab\Vv_l\right)^T \left(\sum_{s=1}^{S}g_s\nab N_s^p\right)-\nab\left(\sum_{s=1}^{S}g_s\nab N_s^p\right)\Vv_l\text{.}
	\end{aligned}
	\end{equation}
\end{proof}
\begin{theorem}(Shape Derivative)\label{Theo:ParticleSD}
	Assume $\{x_k\}_{k=1}^{n}$ moves alongside the deformation, the shape derivative of objective $J_1(\Om)$ is then given by 
	\begin{equation}
	\begin{aligned}
		DJ_1(\Om)[\Vv_l]=&\frac{d}{d\eps}\restr{\sum_{k=1}^{n}J_{1,k}(\x_k,\uv_k,\Om_t)}{\eps=0^+}+\frac{d}{d\eps}\restr{\sum_{k=1}^n(F_{k-1}^{\eps})^T\left(\frac{\Delta t}{m}\mu_k+\frac{\Delta t^2}{m}\delta_k\right)}{\eps=0^+}\\
	=&\sum_{k=1}^n\sum_{i=1}^N\Bigg[\nab\left(\sum_{s=1}^{S}D \gamma_{s}[\Vv_l]N_s^p(\x^i_k)\right)-\left(\nab \Vv_l(\x^i_k)\right)^T \left(\sum_{s=1}^{S}\gamma_{s}\nab N_s^p(\x^i_k)\right)\\
	&\hspace{1cm}-\nab\left(\sum_{s=1}^{S}\gamma_{s}\nab N_s^p(\x^i_k)\right)\Vv_l(\x^i_k)\Bigg]^T\left(\Delta t\mu_k^i+\Delta t^2\delta_k^i\right)\text{.}
	\label{Eq:SDSDF}
	\end{aligned}
	\end{equation}
	where $D\gamma_{s}[\Vv_l]$ is the nodal discretization w.r.t. the $l^{th}$ vertex perturbation of
	\begin{align}
	D \gamma(x)[\Vv]=\text{max}'_\alpha(d_\Om(x))(-\Vv(p_{\partial\Om}(x))\cdot\nv(p_{\partial\Om}(x)))\text{.}
	\end{align}
\end{theorem}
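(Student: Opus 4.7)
The plan is to start directly from the sensitivity decomposition written at the top of Section~\ref{SubSec:DerSDSPH}, which is the specialization of Theorem~\ref{Theo:AdjointsParticle} to the shape parameter $q=\eps$. This splits $DJ_1(\Om)[\Vv_l]$ into the explicit $\eps$-derivative of the objective at fixed trajectory plus the adjoint-weighted explicit $\eps$-derivative of the force. I would then show that the first piece contributes nothing and that the second piece reduces to Lemma~\ref{Lemma:ParticleSDGrad} applied to the signed-distance boundary density.

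First, in the setting of Section~\ref{Sec:MoFo} only the obstacle boundary $\Gc$ is variable, while the coastline $\Ga$ stays fixed. The target $\bar{H}$ lives on $\Ga$, and in (\ref{Eq:ParticleObjective}) both the sample set $\Gamma_{1,h}$ and $\bar H$ are shape-independent; therefore, holding $\x_k,\uv_k$ fixed (as required when reading off $J_k^q$ from Theorem~\ref{Theo:AdjointsParticle}), the term $\frac{d}{d\eps}\sum_k J_{1,h,k}(\x_k,\uv_k,\Om_\eps)$ evaluated at $\eps=0^+$ vanishes. Next I would isolate the shape-explicit portion of the force: the interior SPH pressure and viscous contributions only couple particles through the kernel $W$ and depend on $\Om$ solely via the particle positions, which are already accounted for by the adjoints. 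The only part of $F_{k-1}$ carrying an explicit shape dependence is therefore the boundary contribution $F_D^{Height}=\nab\rho_D$ from (\ref{Eq:ParticleBoundaryPressure}).

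It then remains to apply Lemma~\ref{Lemma:ParticleSDGrad} with $g=\rho_D$ and nodal values $g_s=\gamma_s$, which immediately produces the three-term bracket in (\ref{Eq:SDSDF}); summing over particles $i\in\{1,\dots,N\}$ and time steps $k\in\{1,\dots,n\}$ against the weight $\Delta t\mu_k^i+\Delta t^2\delta_k^i$ inherited from the sensitivity formula yields the stated identity. The nodal shape derivative $D\gamma_s[\Vv_l]$ is finally obtained from the chain rule on the smoothed form $\gamma=\text{max}_\alpha(d_\Om)$ of (\ref{Eq:Smoothed}) together with the signed-distance-function shape derivative recalled in the remark following Lemma~\ref{Lemma:ParticleSD}:
$$D\gamma(x)[\Vv]=\text{max}'_\alpha(d_\Om(x))\bigl(-\Vv(p_{\partial\Om}(x))\cdot\nv(p_{\partial\Om}(x))\bigr).$$
The subtle step I anticipate as the main obstacle is the clean identification of which parts of $F_{k-1}$ carry shape-explicit dependence: the kernel-based interior SPH terms must be shown to be shape-agnostic beyond what the particle trajectory already encodes, so that the Lemma~\ref{Lemma:ParticleSDGrad} computation captures the entire boundary contribution without double-counting with the adjoint equations.
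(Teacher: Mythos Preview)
Your proposal is correct and follows essentially the same route as the paper's proof: you argue that the explicit objective term vanishes because $\Ga$ is fixed, that the kernel-based interior SPH forces carry no explicit shape dependence beyond the particle positions already handled by the adjoints, and then apply Lemma~\ref{Lemma:ParticleSDGrad} to the boundary force $\nab\rho_D$ together with the chain rule for $\gamma=\text{max}_\alpha(d_\Om)$. The paper's own proof is in fact terser than your outline and leans on the same two observations plus the hypothesis that $\{x_k^i\}$ moves with the deformation; the ``subtle step'' you flag is exactly what the paper dispatches in one line by noting the fluid sum depends only on particle positions.
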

\begin{proof}
	Since  $\Ga$ is fixed, the objective (\ref{Eq:ParticleObjective}) is independent of mesh deformations. Hence, the shape derivative is zero, i.e.
	\begin{align*}
	\frac{d}{d\eps}\restr{\sum_{k=1}^{n}J_{1,k}(\x_k,\uv_k,\Om_\eps)}{\eps=0^+}=0\text{.}
	\end{align*}
	We recall that for SPH-flows the fluid and boundary part are split up (\ref{Eq:FluidBoundarySplit}), i.e. for density of particle $i\in\{1,...,N\}$ in each time step $k\in\{1,...,n\}$
	\begin{align*}
	\rho(x^i_k)=\sum_{j=1}^{N}mW(x^{i}_k-x_k^{j},h)+\sum_{s=1}^{S}mN_{s}(x_k^i)\gamma_{s}+\sum_{s=1}^{S}mN_{s}(x_k^i)\gamma_{s}^L\text{.}
	\end{align*}
	The first sum is only dependent on the position of the particles, hence the respective shape derivatives vanish due to adjoints and we have
	\begin{align*}
	\frac{d}{d\eps}\restr{\sum_{j=1}^{N}mW(x_k^{i}-x_k^{j},h)}{\eps=0^+}=0\text{.}
	\end{align*}
	The remainders follow from regarding terms w.r.t. the water gradient (\ref{Eq:ParticleBoundaryPressure}), Lemma \ref{Lemma:ParticleSD} and \ref{Lemma:ParticleSDGrad} such as equation (\ref{Eq:SDSDF}), by assuming that the particle position $\{x_k^i\}_{i,k=1}^{N,n}$ moves alongside the deformation in each time step.
\end{proof}
\begin{remark}
	The assumption that $\{x_k^i\}_{i,k=1}^{N,n}$ moves along the deformation drastically simplifies calculations, otherwise we are required to perform low-level computations similar to works in \cite{Schneider2008}. 
\end{remark}
\begin{remark}
	Using (\ref{Eq:DiscretizedPerturbationofIdentity}) and the fact that nodal values are spatially constant we can rewrite (\ref{Eq:SDSDF}) as 
	\begin{equation}
	\begin{aligned}
	DJ_1(\Om)[\Vv_l]=&\frac{d}{d\eps}\sum_{k=1}^n\restr{(F_{k-1}^{\eps})^T\left(\frac{\Delta t}{m}\mu_k+\frac{\Delta t^2}{m}\delta_k\right)}{\eps=0^+}\\
	=&\sum_{k=1}^n\sum_{i=1}^N\Bigg[\left(\sum_{s=1}^{S}D \gamma_{s}[\Vv_l]\nab N_s^p(x^i_k)\right)- \left(\delta x_l\nab N_l^1(x^i_k)\right)^T \left(\sum_{s=1}^{S}\gamma_{s}\nab N_s^p(x^i_k)\right)\\
	&\hspace{1cm}-\left(\sum_{s=1}^{S}\gamma_{s}\nab\nab N_s^p(x^i_k)\right)\delta x_lN_l^1(x^i_k)\Bigg]^T\left(\Delta t\mu^i_k+\Delta t^2\delta^i_k\right)\text{.}
	\label{Eq:SDSDF_mod}
	\end{aligned}
	\end{equation}
	Here we highlight, that the product of ansatz functions is zero, whenever, $\{x_k^i\}_{i,k=1}^{N,n}$ is not in the support of shape functions $N_s^p$ and $N_l^1$. 
\end{remark}
	\begin{remark}
	Factoring out the respective $\delta x_l$ in (\ref{Eq:SDSDF_mod}) we can calculate the remaining quantities for each mesh vertex perturbation. All quantities can be collected in a vector $DJ_1(\Om)[\Vv_l]$ of size $dL$ to apply some mesh deformation strategy as discussed in \cite[Section 6.3]{Mohammadi2001}. Linear elasticity calculations as in \cite{Baker1999} can be enabled, by choosing a basis of the test function space, calculating all occurring quantities and split vectorial contributions, as it is common in vector-valued finite element methods.
\end{remark}
	\begin{remark}
	    For completeness shape derivatives of the penalty terms (\ref{Eq:1Vol}) and (\ref{Eq:1Peri}) are obtained as \cite{Sokolowski1992}
        \begin{align}
        DJ_{2}(\Om)[\Vv]&=\nu_1\int_{D}\nab\cdot\Vv\diff x \label{Eq:30DJvol}\\
        DJ_{3}(\Om)[\Vv]&=\nu_2\int_{\Gc}\kappa\langle \Vv,\nv\rangle \diff s=\nu_2\int_{\Gc}\nab\cdot \Vv-\langle \frac{\partial \Vv}{\partial \nv},\nv\rangle \diff s\label{Eq:30DJreg}
        .
        \end{align}
        The discretization of (\ref{Eq:30DJvol}) and (\ref{Eq:30DJreg}) and the subsequent usage in optimization routines follows naturally for standard finite element solvers \cite{Schulz2016}.
	\end{remark}

\subsection{Numerical Results}
In this section we will first give details about the numerical implementation, before verifying results for a selected choice of test cases.\\
The SPH techniques from Section \ref{Sec:SPH} to solve Lagrangian SWE with boundary contributions can be summarized in a pseudocode as

\begin{algorithm}[H]
	\SetAlgoLined
	Initialize Particles carrying $(\Qv^i_0,x^i_0)_{i=1}^N$ with Constant Mass $m>0$ and Step-Size $\Delta t>0$\\
		\ForEach{Time Step $k\in\{1,...,n\}$}{
	\ForEach{Particle $i$}{
		Find Neighbour Particles $M\subset\{1,...,N\}$
	}
	\ForEach{Particle $i$}{Compute Fluid Density $\rho_F^i$ [via (\ref{Eq:SPHIdea})]\\
				Compute Boundary Density $\rho_D^i$ [via (\ref{Eq:BoundaryFI})]\\
		Compute Water Height $H^i$ [via (\ref{Eq:DensityWater})]}
	\ForEach{Particle $i$}{
		Calculate Forces:\\
		$F^i_k=F_i^{Height}+F_i^{Viscosity}+F_i^{Sediment}$}
	\ForEach{Particle $i$}{Calculate States:\\
		$\Qv^i_{k+1}=\Qv^i_k+\Delta tF^i_k/m$\\
		$\x^i_{k+1}=\x^i_k+\Delta t\Qv^i_{k+1}$}}
	\caption{\small 2D SPH for Lagrangian SWE with Boundary Contribution}
	\label{Algo:SPHFuidParticle}
\end{algorithm}
\begin{remark}
	The following remarks should guide through Algorithm \ref{Algo:SPHFuidParticle}:
	\begin{enumerate}[label=(\roman*)]
		\item The neighbour search is only mentioned for completeness. For the Gaussian kernel (\ref{Eq:GaussianKernel}), the search for neighbouring particles is trivially omitted.
		\item Whenever quantities should be stored for subsequent computations the time step $k\in\{1,...,n\}$ is explicitly mentioned.
		\item The change from Lagrangian SWE to incompressible Navier-Stokes equations is remarkable easy. Instead of computing water heights, a fluid pressure would be required as
		\begin{align}
			p^i_{F}=B_T\left[\left(\frac{\rho^i_{F}}{\rho_0}\right)^\xi-1\right]
		\end{align}
		for $B_T,\xi>0$ e.g. using Tait's law \cite{Becker2007}. 
	\end{enumerate}
\end{remark}
In all the following examples we will work with a simple mesh in line with definitions given in Section \ref{SubSec:DerSDSPH}, as it can be seen in Figure \ref{Fig:ParticleMesh}.
\begin{figure}[htb!]
	\vspace{-1.75cm}
	\centering
	\begin{tikzpicture}
	\node[anchor=south west,inner sep=0](zero){\includegraphics[scale=0.25]{./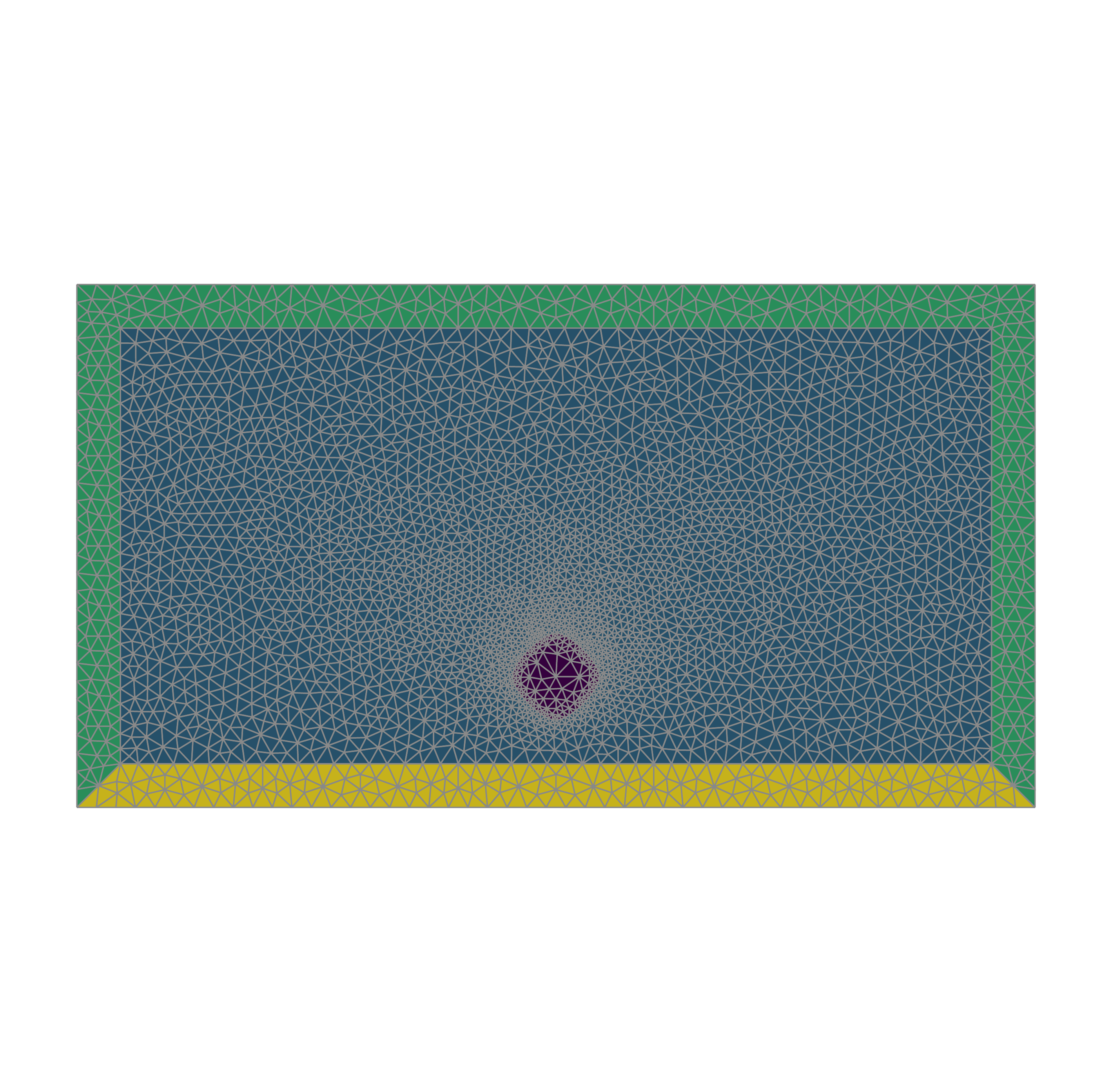}};
	\end{tikzpicture}
	\vspace{-2cm}
	\caption{Initial Mesh with Coloured Subdomains}
	\label{Fig:ParticleMesh}
\end{figure}
The solution to the signed distance function is based on the solution of the Eikonal equation with $f(x)=1$, $q(x)=0$
\begin{equation}
\begin{aligned}
|\nab w(x)|&=f(x) \quad &x\in\Om\\
w(x)&=q(x)\quad &x\in\Gamma
,
\end{aligned}
\end{equation}
where we have implemented a viscous and stabilized version to obtain $w\in H^1(\Om)$ for all $v\in H^1(\Om)$ i.e.
\begin{equation}
\begin{aligned}\int_\Om\sqrt{\nab w\cdot\nab w}v\diff x-\int_\Om fv\diff x +\int_\Om\eps_{SDF} \nab w\cdot\nab v\diff x=0
\label{Eq:ViscousEikonal}
,
\end{aligned}
\end{equation}
where $\eps_{SDF}=\max h_\kappa$ is dependent on the element-diameter $h_\kappa$ of mesh-cell $\kappa\in\mathcal{T}_h$. Build on this solution, modifications as in (\ref{Eq:Smoothed}) are exercised downstream by prescribing respective nodal values.
The recursive adjoints require the calculation of a matrix with partial derivatives w.r.t. the states. The partial derivatives matrix w.r.t. the positions is obtained by a $2N\times 2N$ matrix
\begin{align}
\frac{\partial F}{\partial x}=
\begin{pmatrix}
\frac{\partial F_1^{x}}{\partial x^1} &\frac{\partial F_1^{x}}{\partial x^2} & \ldots & \frac{\partial F_1^{x}}{\partial x^N}&	\frac{\partial F_1^{x}}{\partial y^1} &\frac{\partial F_1^{x}}{\partial y^2} & \ldots & \frac{\partial F_1^{x}}{\partial y^N}\\
\frac{\partial F_2^{x}}{\partial x^1} & \ddots &  & \vdots&\frac{\partial F_2^{x}}{\partial y^1} & \ddots &  & \vdots\\
\vdots  & &\ddots & \vdots&\vdots  & &\ddots & \vdots\\
\frac{\partial F_N^{x}}{\partial x^1} & \ldots & \ldots & \frac{\partial F_N^{x}}{\partial x^N}&\frac{\partial F_N^{x}}{\partial y^1} & \ldots & \ldots & \frac{\partial F_N^{x}}{\partial y^N}\\
\frac{\partial F_1^{y}}{\partial x^1} &\frac{\partial F_1^{y}}{\partial x^2} & \ldots & \frac{\partial F_1^{y}}{\partial x^N}&	\frac{\partial F_1^{y}}{\partial y^1} &\frac{\partial F_1^{y}}{\partial y^2} & \ldots & \frac{\partial F_1^{y}}{\partial y^N}\\
\frac{\partial F_2^{y}}{\partial x^1} & \ddots &  & \vdots&\frac{\partial F_2^{y}}{\partial y^1} & \ddots &  & \vdots\\
\vdots  & &\ddots & \vdots&\vdots  & &\ddots & \vdots\\
\frac{\partial F_N^{y}}{\partial x^1} & \ldots & \ldots & \frac{\partial F_N^{y}}{\partial x^N}&	\frac{\partial F_N^{y}}{\partial y^1} & \ldots & \ldots & \frac{\partial F_N^{y}}{\partial y^N}
\end{pmatrix}
\label{Eq:Pdxmatrix}
\end{align}
and the $2N\times 2N$ matrix of partial derivatives of the particle velocities as
\begin{align}
\frac{\partial F}{\partial u}=
\begin{pmatrix}
\frac{\partial F_1^{x}}{\partial u^1} &\frac{\partial F_1^{x}}{\partial u^2} & \ldots & \frac{\partial F_1^{x}}{\partial u^N}&	\frac{\partial F_1^{x}}{\partial v^1} &\frac{\partial F_1^{x}}{\partial v^2} & \ldots & \frac{\partial F_1^{x}}{\partial v^N}\\
\frac{\partial F_2^{x}}{\partial u^1} & \ddots &  & \vdots&\frac{\partial F_2^{x}}{\partial v^1} & \ddots &  & \vdots\\
\vdots  & &\ddots & \vdots&\vdots  & &\ddots & \vdots\\
\frac{\partial F_N^{x}}{\partial u^1} & \ldots & \ldots & \frac{\partial F_N^{x}}{\partial u^N}&\frac{\partial F_N^{x}}{\partial v^1} & \ldots & \ldots & \frac{\partial F_N^{x}}{\partial v^N}\\
\frac{\partial F_1^{y}}{\partial u^1} &\frac{\partial F_1^{y}}{\partial u^2} & \ldots & \frac{\partial F_1^{y}}{\partial u^N}&	\frac{\partial F_1^{y}}{\partial v^1} &\frac{\partial F_1^{y}}{\partial v^2} & \ldots & \frac{\partial F_1^{y}}{\partial v^N}\\
\frac{\partial F_2^{y}}{\partial u^1} & \ddots &  & \vdots&\frac{\partial F_2^{y}}{\partial v^1} & \ddots &  & \vdots\\
\vdots  & &\ddots & \vdots&\vdots  & &\ddots & \vdots\\
\frac{\partial F_N^{y}}{\partial u^1} & \ldots & \ldots & \frac{\partial F_N^{y}}{\partial u^N}&	\frac{\partial F_N^{y}}{\partial v^1} & \ldots & \ldots & \frac{\partial F_N^{y}}{\partial v^N}\\
\end{pmatrix}\text{.}
\label{Eq:Pdvmatrix}
\end{align}
with two-dimensional components $x=(x,y)$ and $u=(u,v)$.
In the continuous setting, a deformation scheme can be obtained by regarding the Steklov-Poincaré metric \cite{Schulz2016}, such that a solution of the linear elasticity equation $\vec{W}:\Om\rightarrow\R^2$ can be used, i.e.
\begin{equation}
\begin{aligned}
\int\sigma(\vec{W}):\eps(\Vv)&=DJ(\Om)[\Vv] \hspace{1cm} \quad &\forall\Vv\in H_0^1(\Om,\R^2)\\
\sigma:&=\lambda Tr(\eps(\vec{W}))\mathbf{I}_2+2\mu\eps(\vec{W})\\
\eps(\vec{W}):&=\frac{1}{2}(\nab \vec{W}+\nab \vec{W}^T)\\
\eps(\Vv):&=\frac{1}{2}(\nab \Vv+\nab \Vv^T)
,
\label{Eq:29LinearElasticity}
\end{aligned}
\end{equation}
where $\sigma$ and $\eps$ are called strain and stress tensor and $\lambda$ and $\mu$ are called Lamé parameters. We have chosen $\lambda=0$ and $\mu$ as the solution of the following Poisson problem
\begin{equation}
\begin{aligned}
-\bigtriangleup\mu&=0 \hspace{1cm} &\text{in } &\Om&&\\
\mu&=\mu_{max} \hspace{1cm} &\text{on } &\Gc&&\\
\mu&=\mu_{min} \hspace{1cm} &\text{on } &\Ga, \Gb.&&
\end{aligned}
\label{Eq:33Lame}
\end{equation} 
In our discrete setting is the right hand-side in (\ref{Eq:29LinearElasticity}) manually evaluated using (\ref{Eq:SDSDF_mod}) and the associated remark. The full gradient-descent based shape optimization algorithm for SPH-flows is pseudocoded below.
\begin{algorithm}[H]
	\SetAlgoLined
	Initialization Mesh, Particles\\
	\While{$||DJ(\Om_k)[\Vv]||>\eps_{TOL}$}{
		1. Calculate Modified SDF $\gamma_k$ [via Viscous Eikonal Eq. (\ref{Eq:ViscousEikonal})]\\
		2. Calculate States $\x_k,\uv_k$ [via SPH Algorithm \ref{Algo:SPHFuidParticle}]\\
		3. Calculate Adjoints $\delta_k,\mu_k$ [via Section \ref{Sec:AdjointParticleSystems} \& AD]\\
		4. Calculate Gradient $W_k$ [via (\ref{Eq:SDSDF_mod}-\ref{Eq:30DJreg}) \& Linear Elasticity (\ref{Eq:29LinearElasticity})]\\
		5. Perform Linesearch for $\tilde{W}_k$\\
		6. Calculate $\Om_{k+1}$ [via $\tilde{W}_k$ and (\ref{Eq:DiscretizedPerturbationofIdentity})] }
	\caption{\small Lagrangian SWE via SPH Shape Optimization Algorithm}
\end{algorithm}
\begin{remark}
	Calculation of partial derivatives in (\ref{Eq:Pdxmatrix}) and (\ref{Eq:Pdvmatrix}) is manually possible, however results in a tremendous calculative effort, prone to errors. For circumvention we use automatic differentiation (AD) via the Autograd\footnote[1]{https://github.com/HIPS/autograd} library in either a forward or backward mode for the fluid portion, which is possible for code generated fully in Numpy\footnote[2]{https://numpy.org/}. Here we highlight for boundary terms as defined in (\ref{Eq:BoundaryFI}) the partial derivative matrix for positional state is diagonal and zero for the velocity state. We also point out that derivatives of boundary contributions are fully implemented in FEniCS \cite{FEniCS2015}. The manual evaluation of (\ref{Eq:SDSDF}) in Step 4 is relying on a constant evaluation of finite element ansatz
	functions, where the FEniCS build-in function \textit{evaluate\_basis\_derivatives\_all} is used.
\end{remark}

\subsubsection*{Example 1:}
\label{Sec:NumericalExamplesSDSPH}
In the first example we model the propagation of two particles, i.e. $N=2$, with initial position $x_0^1=(0.5,0.5)$ and $x_0^2=(0.63,0.5)$,
towards the shore by prescribing initial velocities as $\uv_0=(0,-3)^2$. This test case is deliberately kept simple to analyse the procedure. In Figure \ref{Fig:Ex1ParticleProp} we have visualized the particle movements for time snapshots $t\in\{0,0.6,0.12,0.24\}$ for smoothing radius $h=0.04$, particle mass $m=1$, smoothing factor $\alpha=100$ and reference density $\rho_0=10$. The particle propagation is performed using time steps of size $\Delta t=0.008$ with end time $T=0.18$.  The red particle, with initial position $x_0^1$, travels towards the shore $\Ga$ and effectively accounts for an increased objective in form of (\ref{Eq:ParticleObjective}). In contrast, the blue particle contributes only marginally to the objective, starting from initial position $x_0^2$, being reflected from the obstacle and hence travelling back into the field and being collected in the outflow channel.

\begin{figure}[htb!]
	\centering
	\begin{subfigure}{0.4\textwidth}
		\centering
		\begin{tikzpicture}
		\node[anchor=south west,inner sep=0] (1) 
{\includegraphics[scale=0.3]{./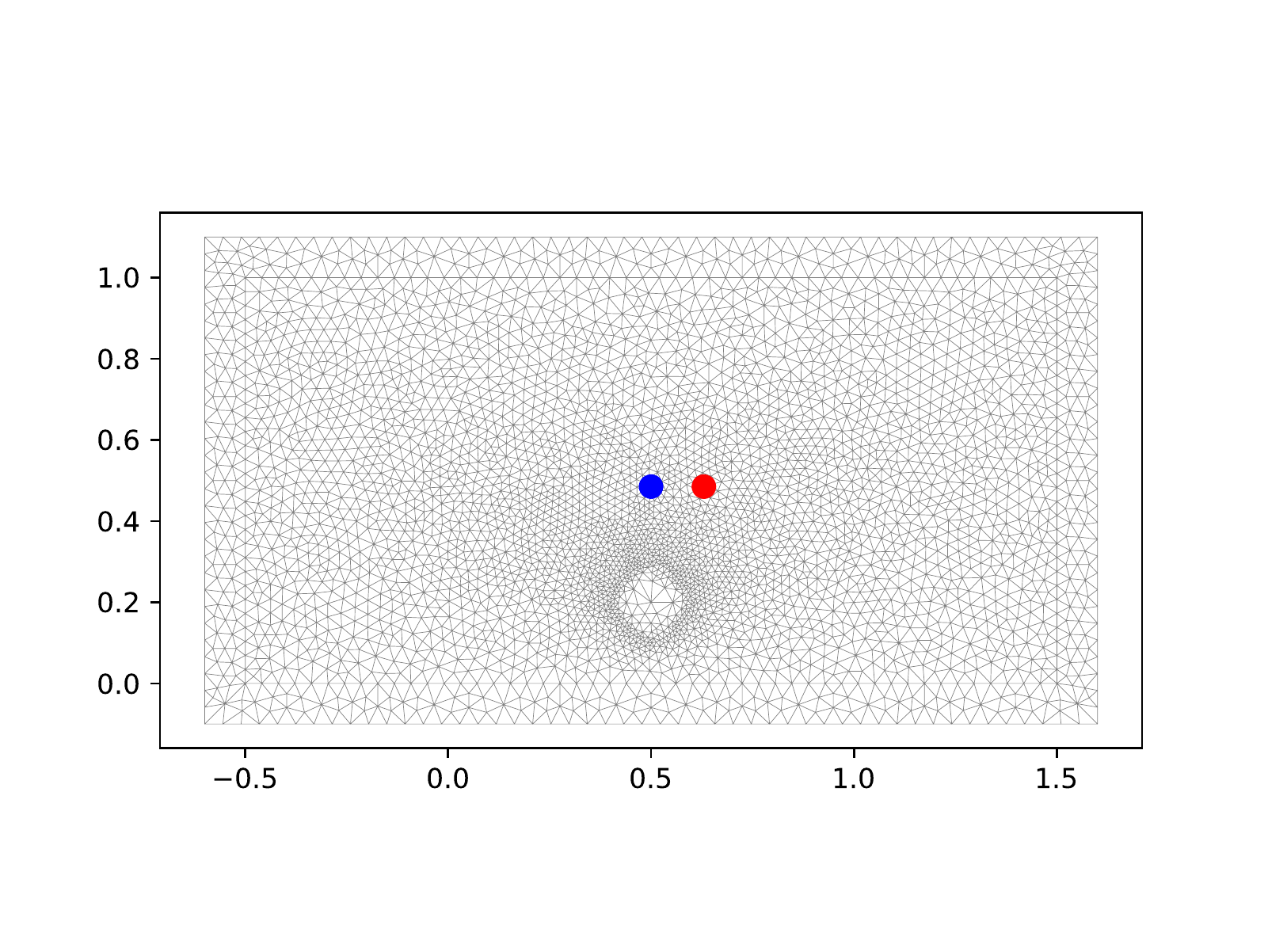}};
		\end{tikzpicture}
		\caption{$t=0$}
	\end{subfigure}
	\hfill
	\begin{subfigure}{0.4\textwidth}
		\centering
		\begin{tikzpicture}
		\node[anchor=south west,inner sep=0] (1) {
	\includegraphics[scale=0.3]{./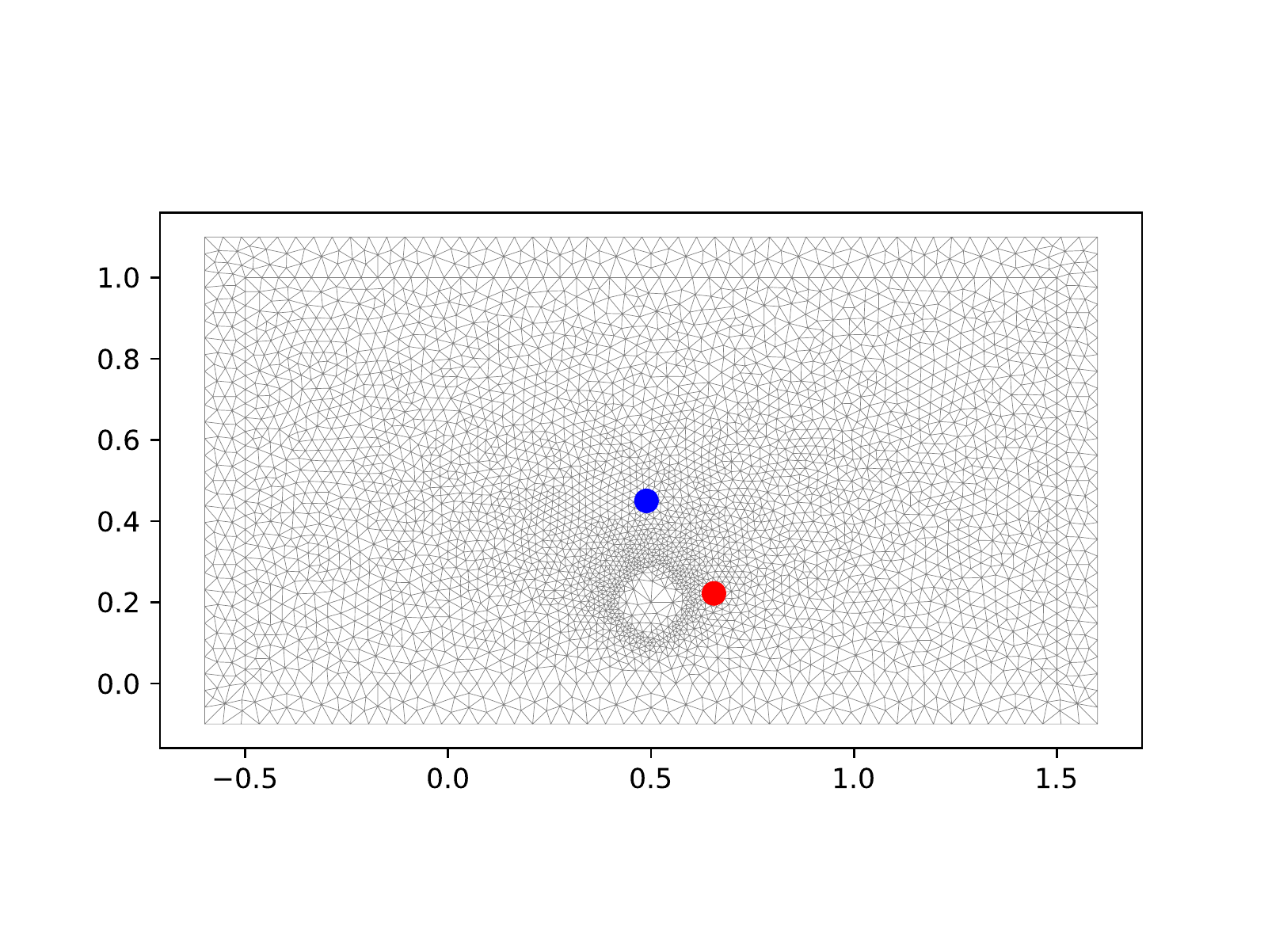}};
		\end{tikzpicture}
		\caption{$t=0.06$}
	\end{subfigure}
	\hfill
	\begin{subfigure}{0.4\textwidth}
		\centering
		\begin{tikzpicture}
		\node[anchor=south west,inner sep=0] (1) 
	{\includegraphics[scale=0.3]{./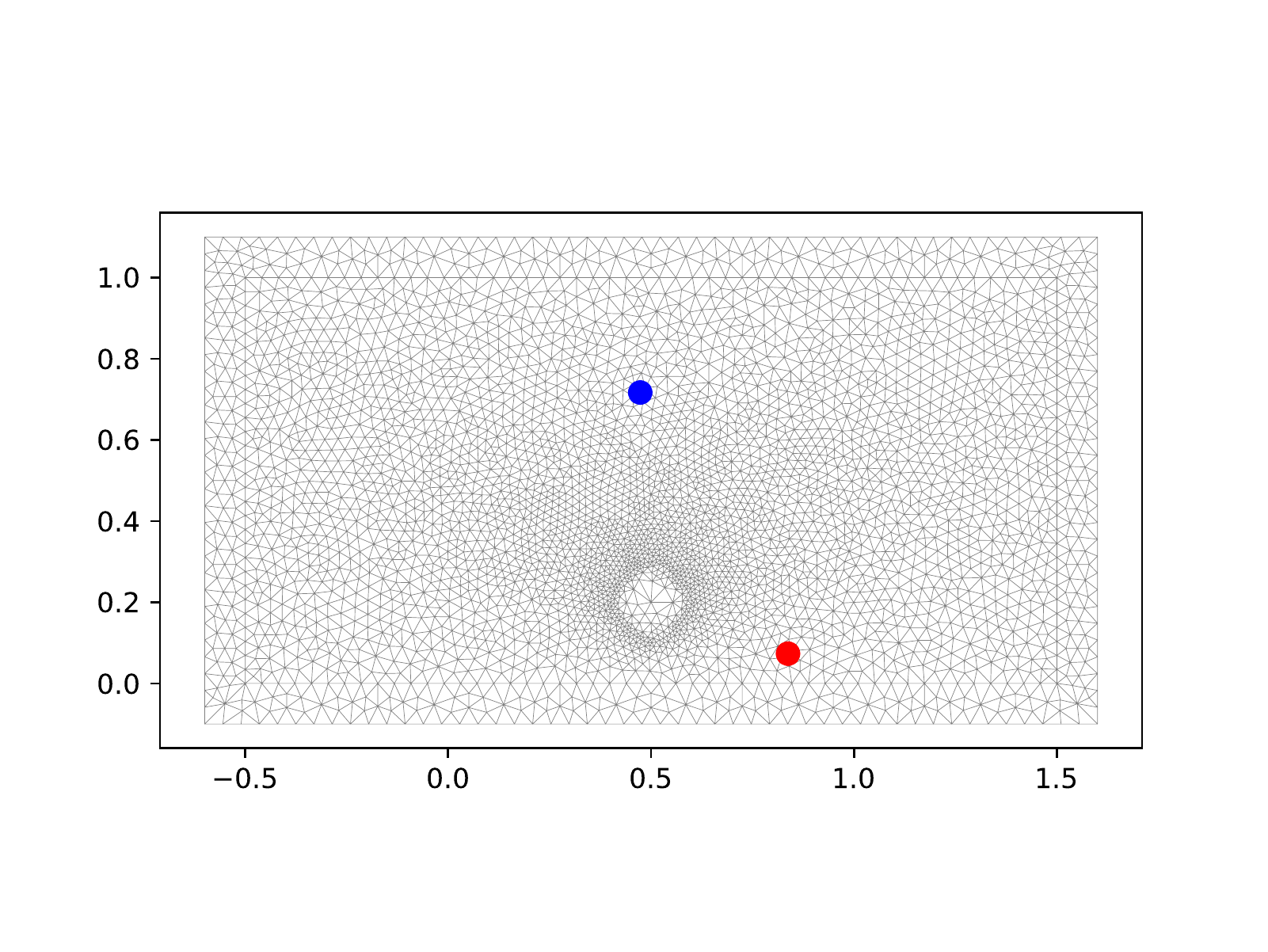}};
		\end{tikzpicture}
		\caption{$t=0.12$}
	\end{subfigure}
	\hfill
	\begin{subfigure}{0.4\textwidth}
		\centering
		\begin{tikzpicture}
		\node[anchor=south west,inner sep=0] (1) 
	{\includegraphics[scale=0.3]{./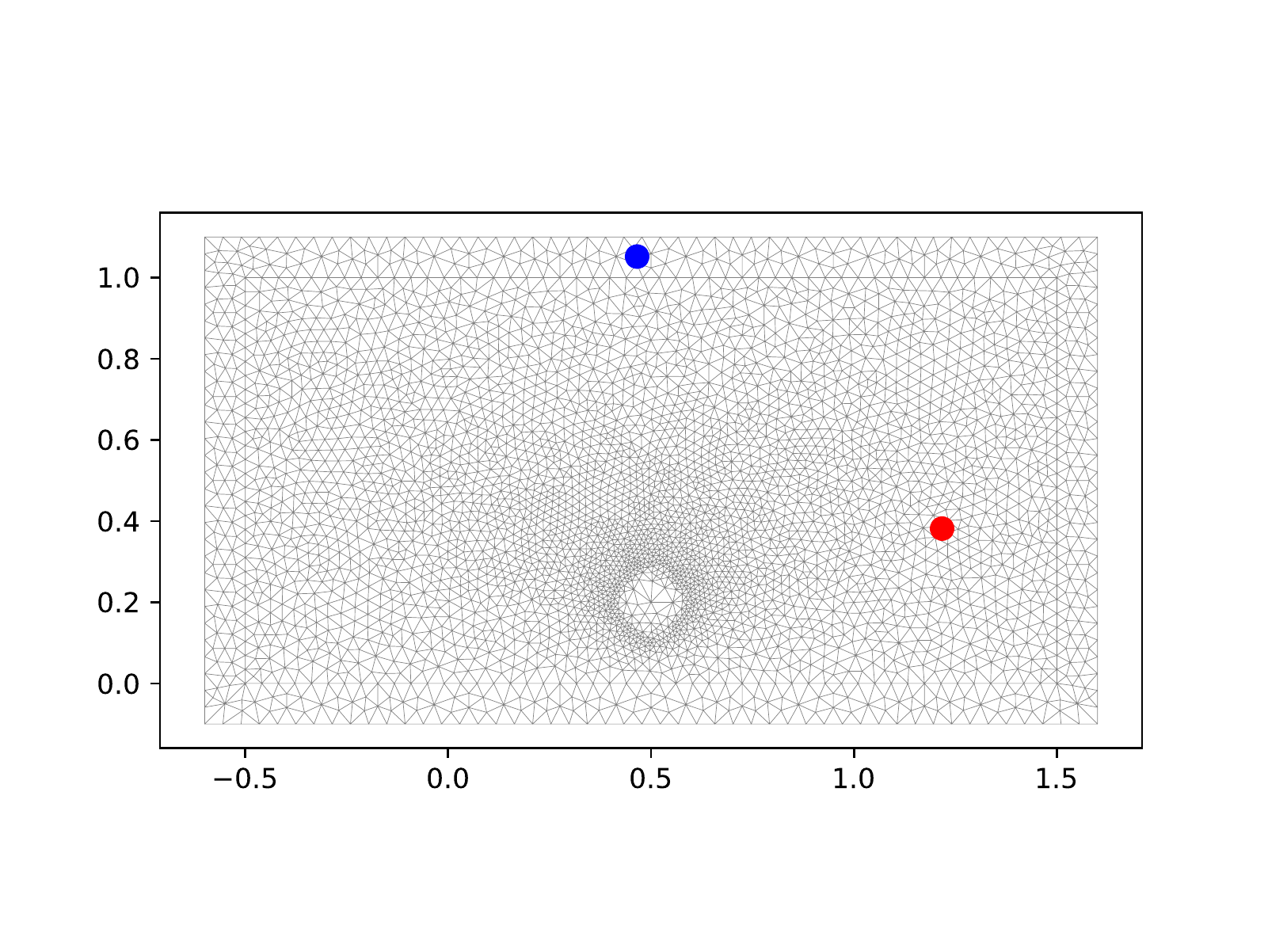}};
		\end{tikzpicture}
		\caption{$t=0.24$}
	\end{subfigure}
	\caption[Ex.1 Particle Propagation]{Ex.1 Particle Propagation}
	\label{Fig:Ex1ParticleProp}
\end{figure}

Both particles are within the interaction radius of the obstacle for a certain time frame, such that shape derivatives of boundary height forces are non-zero. However, as stated above, the objective contribution of blue ensures that adjoints $\{\delta^1_k\}_{k=1}^n$ and $\{\mu^1_k\}_{k=1}^n$ are vanishingly small. The initial deformation vector is hence predominantly activated in reds interaction region with the obstacle, as it can be observed in Figure \ref{Fig:Ex1FinalResults} on the left. Relying on the shape optimization algorithm that we have presented before, for initial step-size $\rho_{step}=\expnumber{1}{-4}$ and tolerance $\eps_{TOL}=\expnumber{1}{-9}$, we are able to deform the obstacle setting $\mu_{min}=10$ and $\mu_{max}=100$ in the Poisson problem. The final mesh is able to decrease the objective up to a minimum, as it can be seen in Figure \ref{Fig:Ex1FinalResults} in the mid and right part. We highlight, that even a small deformation is able to achieve these results in this simple setting.
\begin{figure}[htb!]
	\centering
	\begin{subfigure}{0.3\textwidth}
		\centering
		\begin{tikzpicture}
		\node[anchor=south west,inner sep=0] (1) 
		{\includegraphics[scale=0.3]{./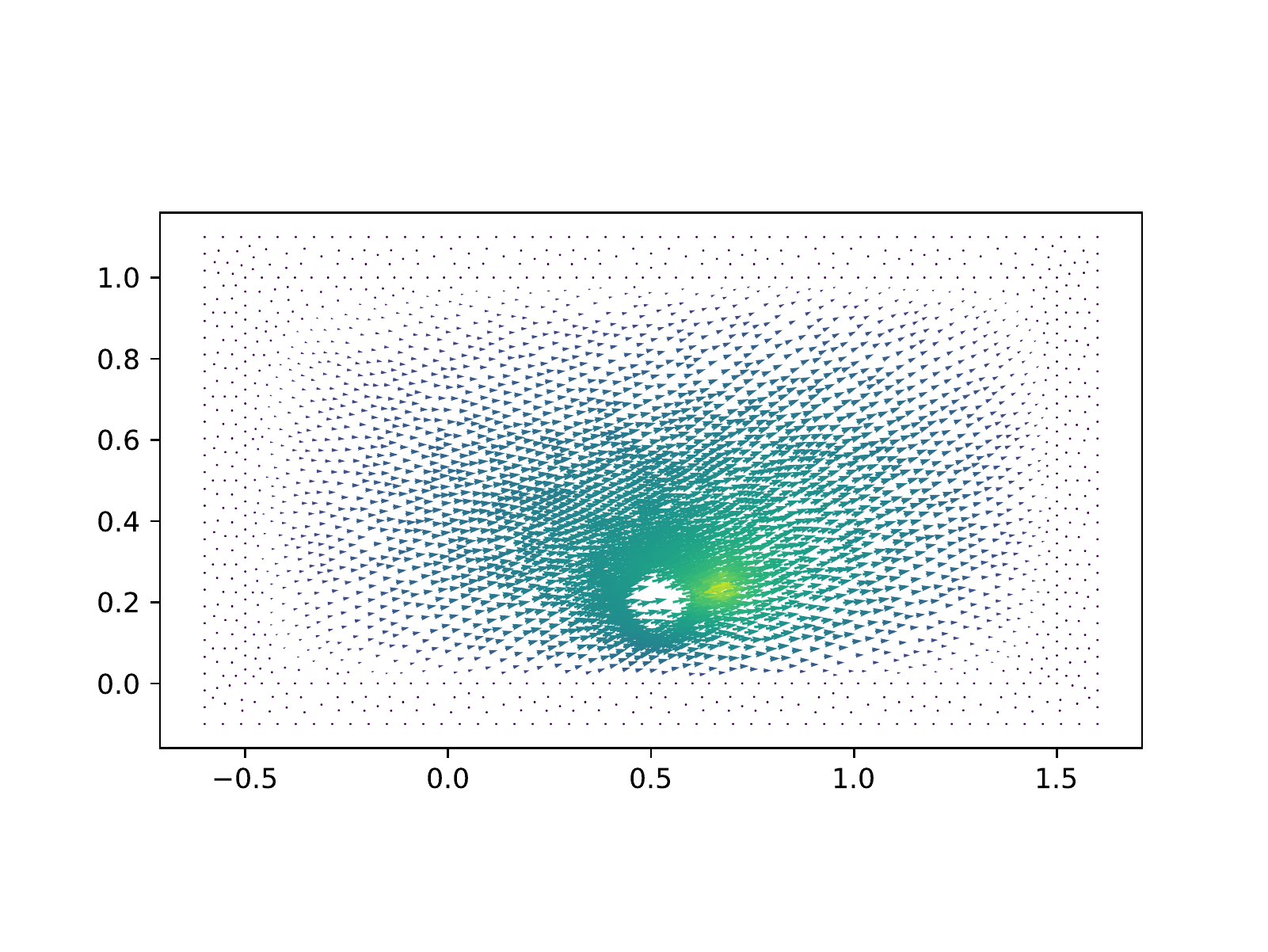}};
		\end{tikzpicture}
		\caption{Deformation Field}
	\end{subfigure}
	\hfill
	\begin{subfigure}{0.3\textwidth}
		\centering
		\begin{tikzpicture}
		\node[anchor=south west,inner sep=0] (1) {
			\includegraphics[scale=0.3]{./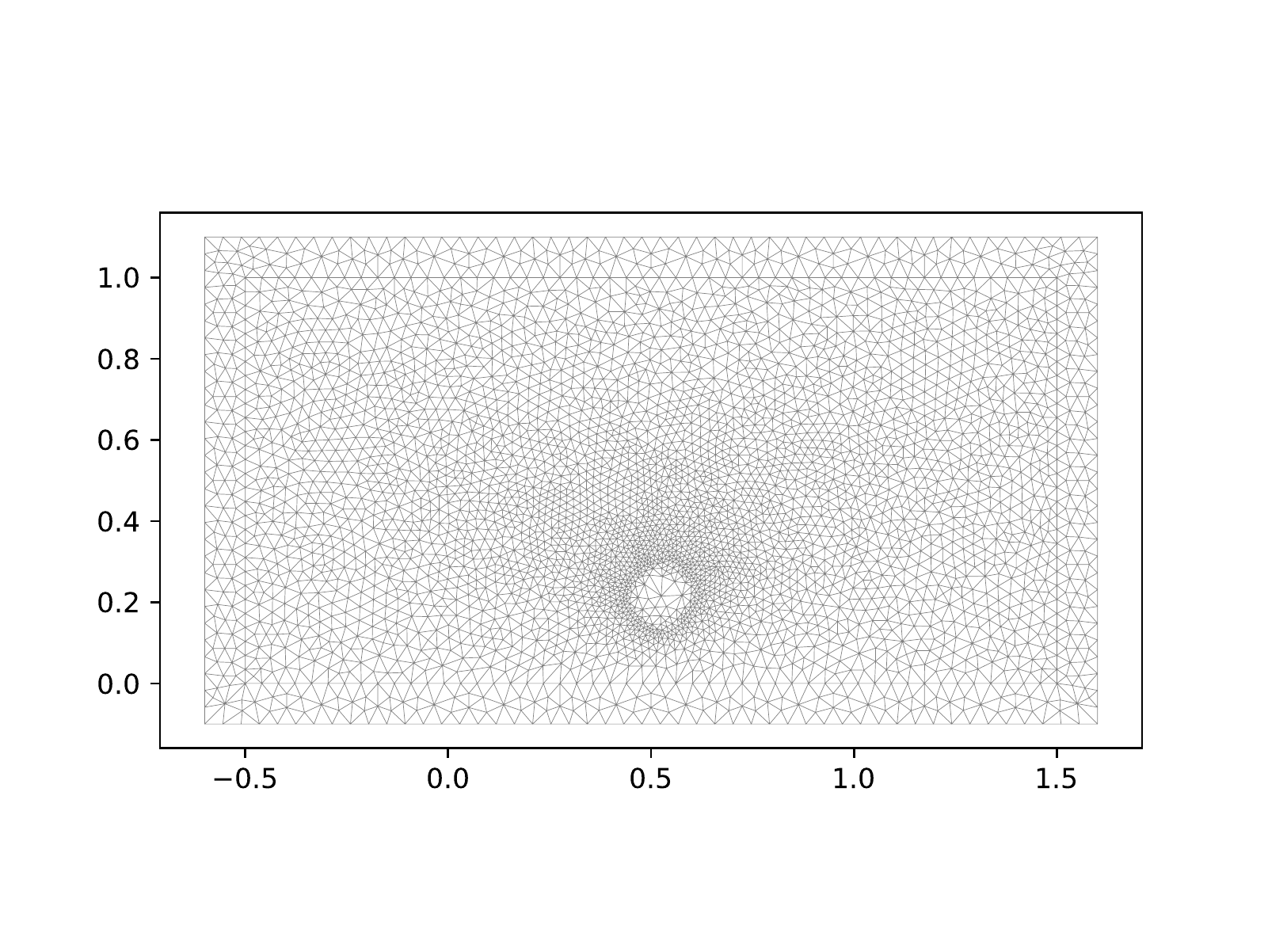}};
		\end{tikzpicture}
		\caption{Optimized Mesh}
	\end{subfigure}
	\hfill
	\begin{subfigure}{0.3\textwidth}
		\centering
		\begin{tikzpicture}
		\node[anchor=south west,inner sep=0] (1) 
	{\scalebox{0.48}{
\begin{tikzpicture}

\begin{axis}[
legend cell align={left},
legend style={fill opacity=0.8, draw opacity=1, text opacity=1, draw=white!80!black},
tick align=outside,
tick pos=left,
title={Objective Value},
x grid style={white!69.0196078431373!black},
xlabel={Iteration},
xmajorgrids,
xmin=-0.25, xmax=19.25,
xtick style={color=black},
y grid style={white!69.0196078431373!black},
ylabel={Objective},
ymajorgrids,
ymin=-0.0752199117504898, ymax=1.57961814705944,
ytick style={color=black}
]
\addplot [line width=1.64pt, blue]
table {%
0 1.50439823529535
1 1.49342343568817
2 1.48254666452127
3 1.46256127844682
4 1.43476869766745
5 0.941223121211458
6 0.419317835417319
7 0.378111248810039
8 0.348345430178122
9 0.319423482115772
10 0.291481255879910
11 0.2601648484878
12 0.2284522698912
13 0.151215778118328
14 3.48775122318848e-02
15 1.15823211123287e-04
16 1.76160047800943e-06
17 1.35978252070067e-11
18  1.2329323793408e-12
19  1.8121421079965e-13
};
\addlegendentry{$J(\Omega)$}
\end{axis}

\end{tikzpicture}}};
		\end{tikzpicture}
		\caption{Objective}
	\end{subfigure}
	\caption[Ex.1 Optimization Results]{Ex.1 Optimization Results}
	\label{Fig:Ex1FinalResults}
\end{figure}
\FloatBarrier

\subsubsection*{Example 2:}
In the second example we increase the number of particles, i.e. $N=180$, with initial positions drawn from a normal distribution as $x_0=(0.7+\mathcal{N}(0,0.1),0.6+\mathcal{N}(0,0.1))^{N}$. The particles once more travel towards the shore driven by initial velocity as $\uv_0=(0,-3)^N$. The remaining settings are as in the first example. We can observe the movement in Figure \ref{Fig:Ex2ParticleProp}.

\begin{figure}[htb!]
	\centering
	\begin{subfigure}{0.4\textwidth}
		\centering
		\begin{tikzpicture}
		\node[anchor=south west,inner sep=0] (1) 
		{\includegraphics[scale=0.3]{./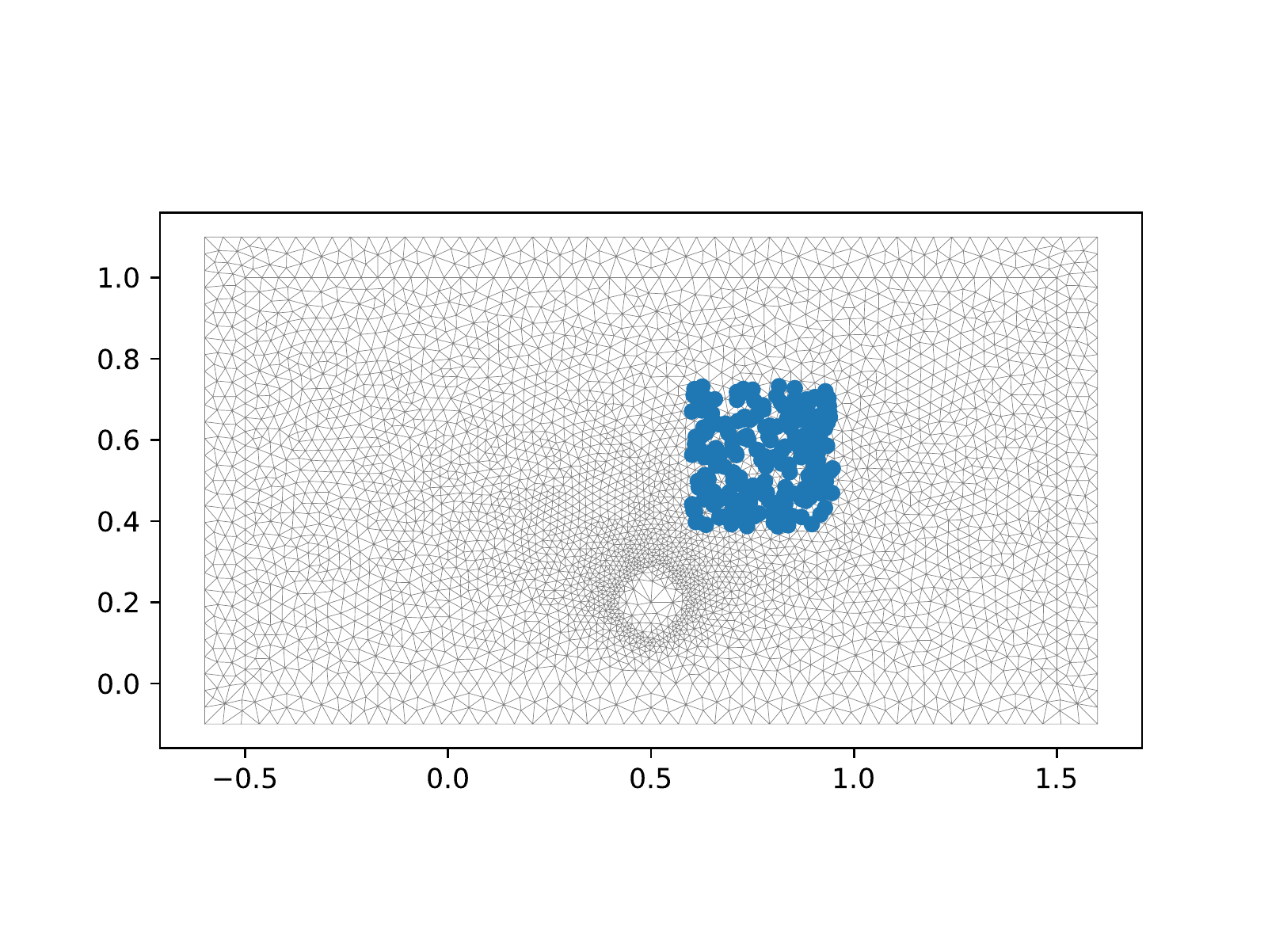}};
		\end{tikzpicture}
		\caption{$t=0$}
	\end{subfigure}
	\hfill
	\begin{subfigure}{0.4\textwidth}
		\centering
		\begin{tikzpicture}
		\node[anchor=south west,inner sep=0] (1) {
			\includegraphics[scale=0.3]{./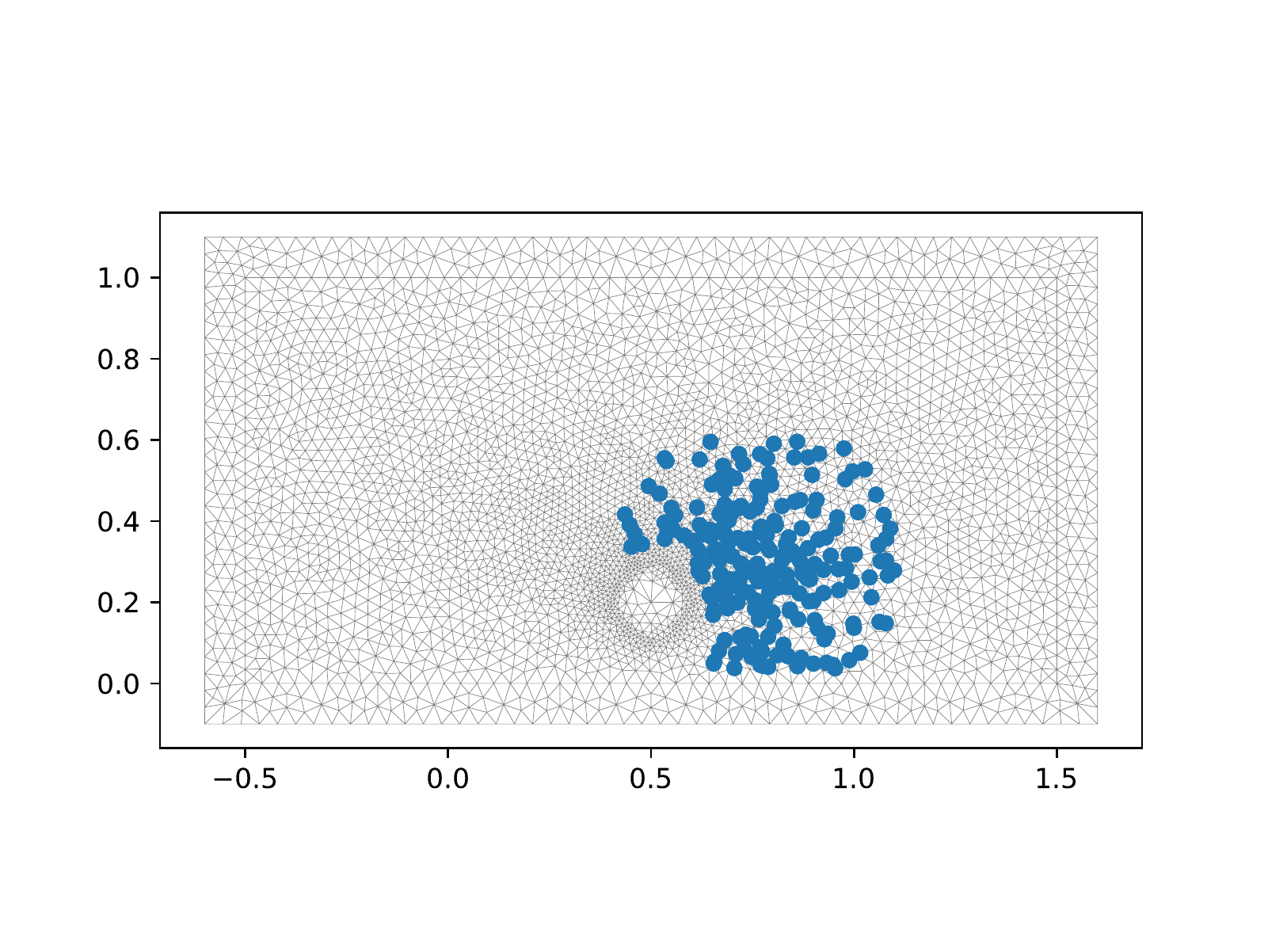}};
		\end{tikzpicture}
		\caption{$t=0.06$}
	\end{subfigure}
	\hfill
	\begin{subfigure}{0.4\textwidth}
		\centering
		\begin{tikzpicture}
		\node[anchor=south west,inner sep=0] (1) 
		{\includegraphics[scale=0.3]{./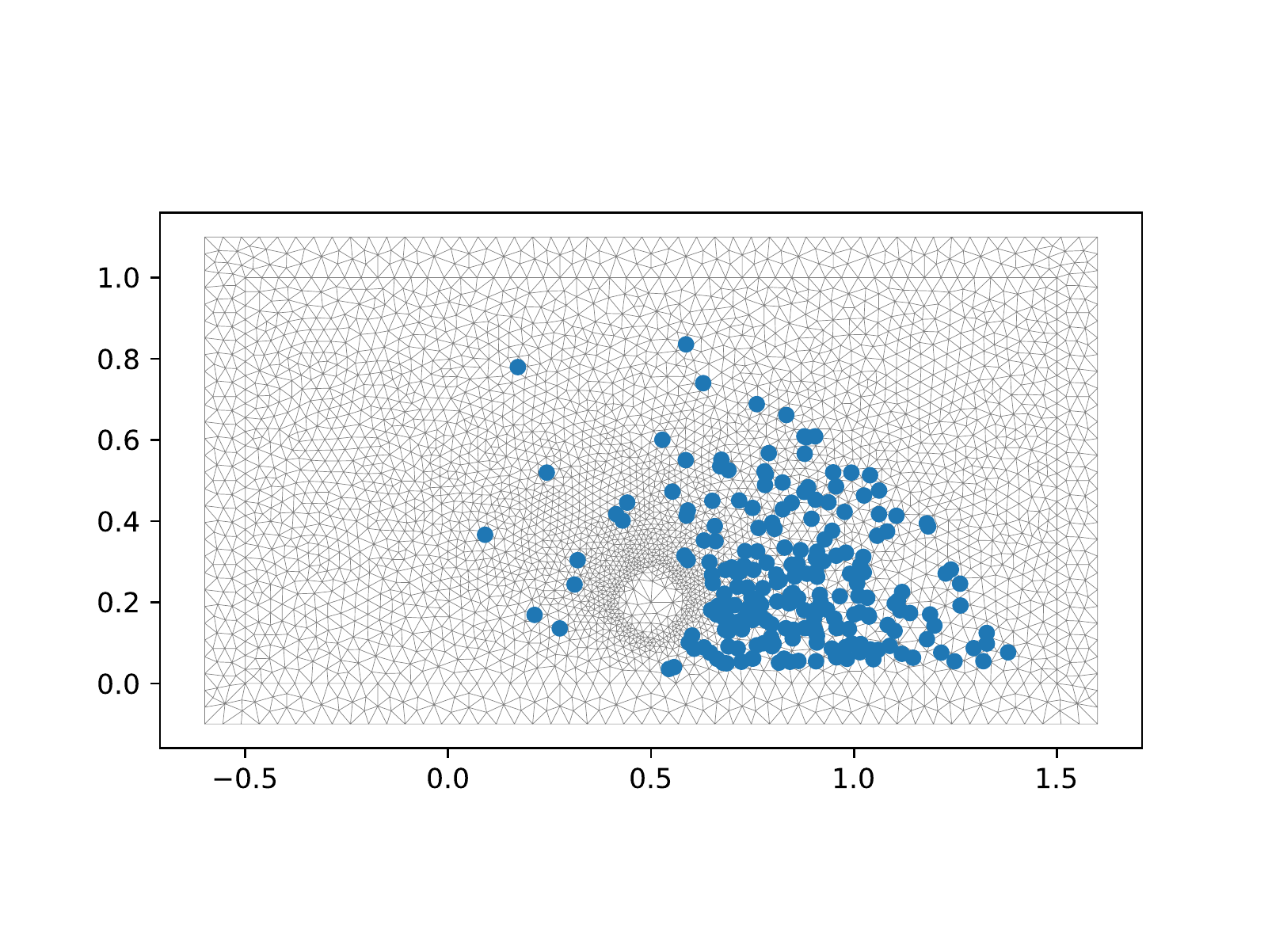}};
		\end{tikzpicture}
		\caption{$t=0.12$}
	\end{subfigure}
	\hfill
	\begin{subfigure}{0.4\textwidth}
		\centering
		\begin{tikzpicture}
		\node[anchor=south west,inner sep=0] (1) 
		{\includegraphics[scale=0.3]{./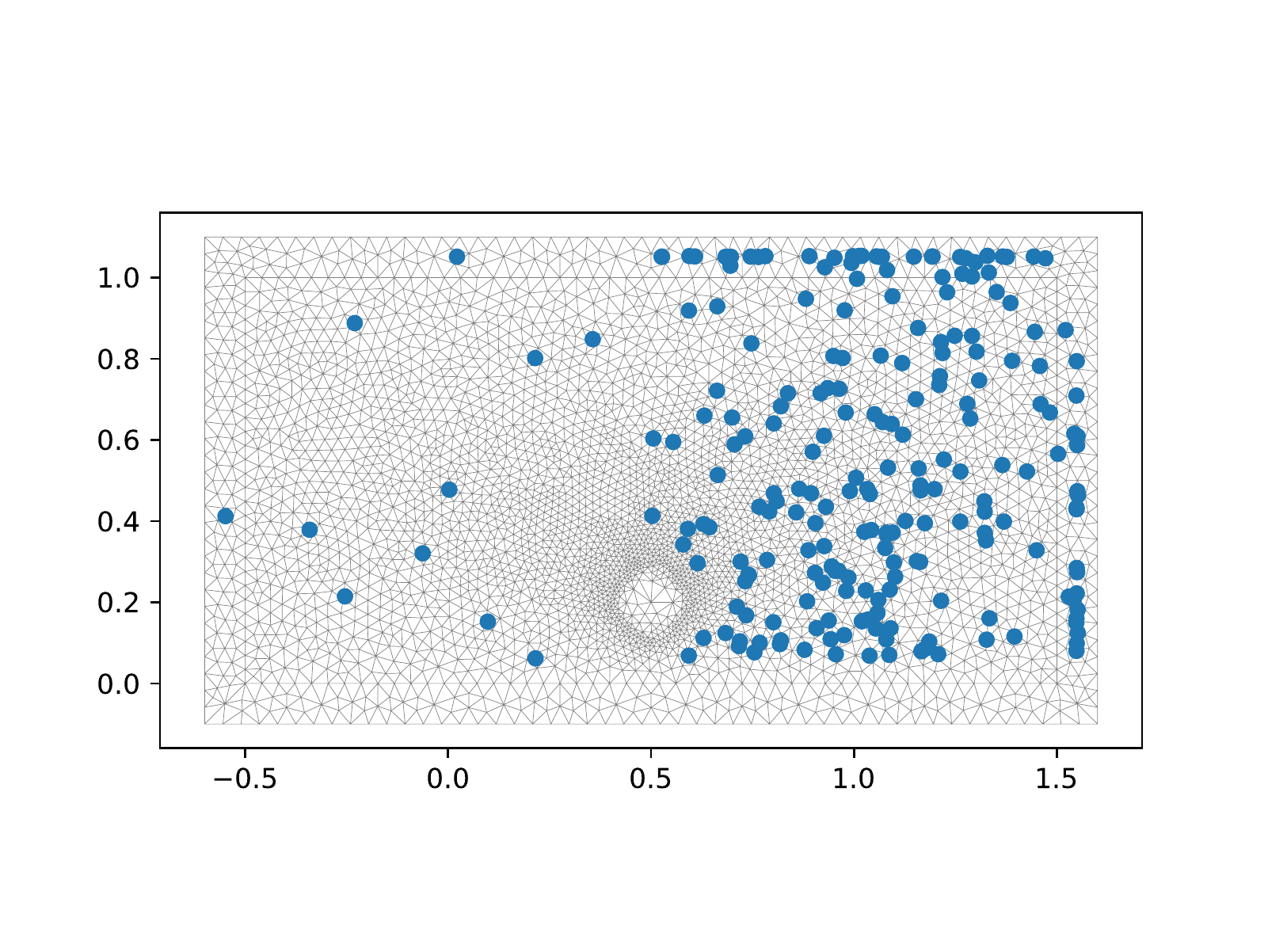}};
		\end{tikzpicture}
		\caption{$t=0.24$}
	\end{subfigure}
	\caption[Ex.2 Particle Propagation]{Ex.2 Particle Propagation}
	\label{Fig:Ex2ParticleProp}
\end{figure}
Once more relying on our shape optimization algorithm, we are able to deform the obstacle using resulting deformation fields, pictured in Figure \ref{Fig:Ex2FinalResults} on the left for the first iteration. In this setting a larger deformation is necessary to effectively reduce the objective as we see in the final mesh Figure \ref{Fig:Ex2FinalResults} in the middle and on the right. We would like to highlight that the performance of the algorithm can be degrading by consecutive particle and boundary interactions, resulting from mesh deformations. Potentially this hinders us from obtaining improved results.
\begin{figure}[htb!]
	\centering
		\begin{subfigure}{0.3\textwidth}
		\centering
		\begin{tikzpicture}
		\node[anchor=south west,inner sep=0] (1) 
		{\includegraphics[scale=0.3]{./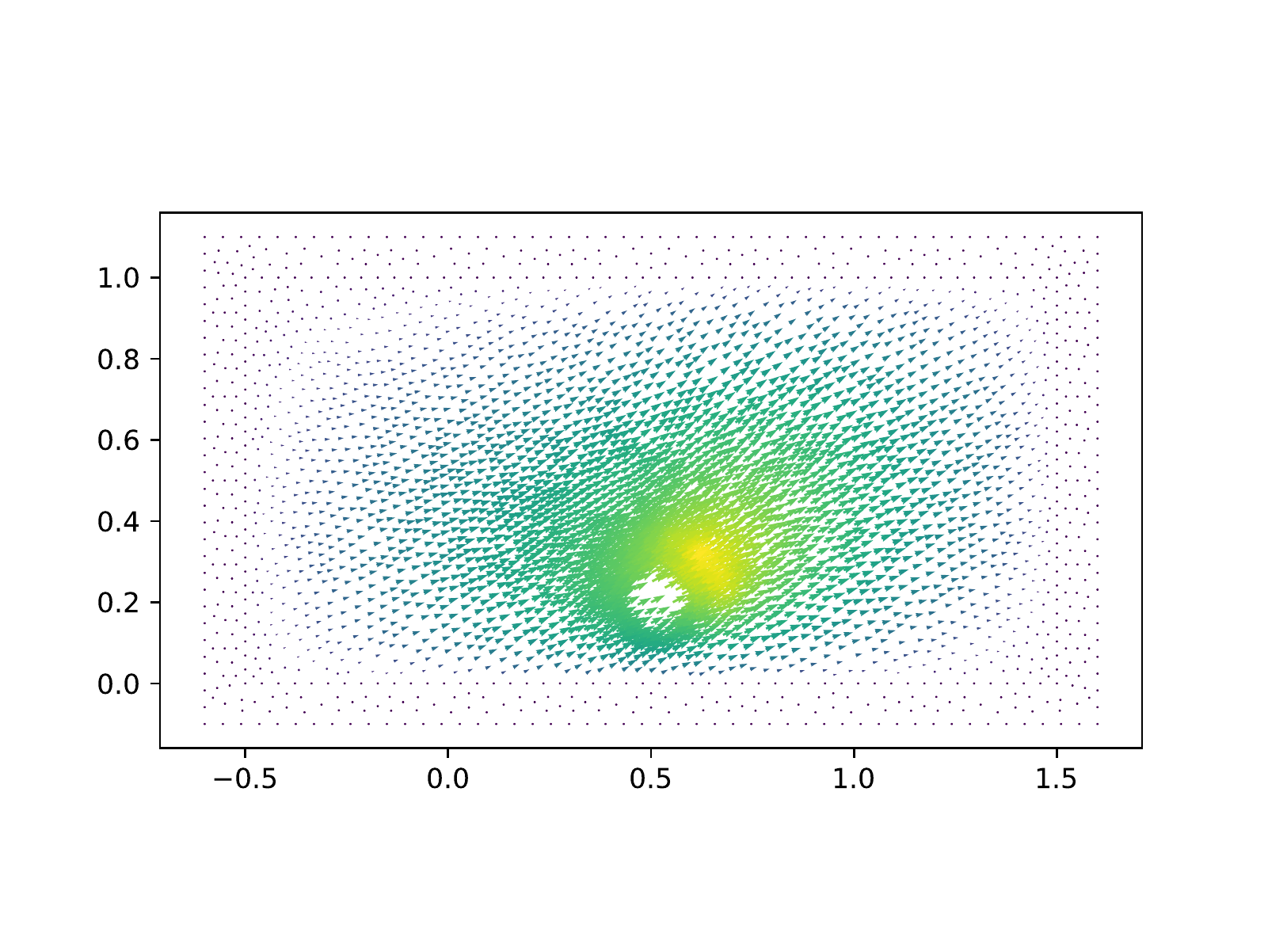}};
		\end{tikzpicture}
		\caption{Deformation Field}
	\end{subfigure}
	\hfill
	\begin{subfigure}{0.3\textwidth}
		\centering
		\begin{tikzpicture}
		\node[anchor=south west,inner sep=0] (1) {
			\includegraphics[scale=0.3]{./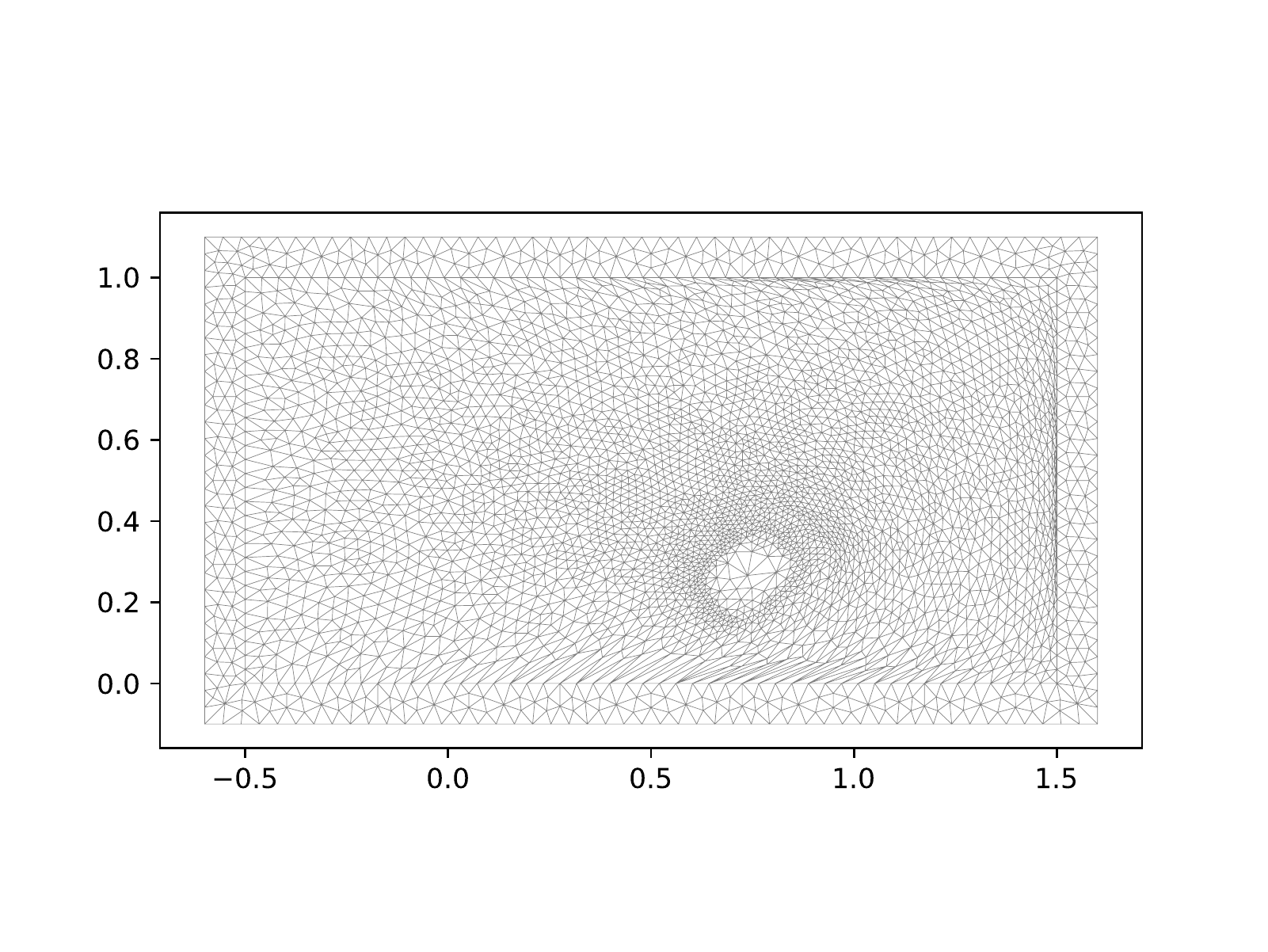}};
		\end{tikzpicture}
		\caption{Optimized Mesh}
	\end{subfigure}
	\hfill
	\begin{subfigure}{0.3\textwidth}
		\centering
		\begin{tikzpicture}
		\node[anchor=south west,inner sep=0] (1) 
		{\scalebox{0.48}{
\begin{tikzpicture}

\begin{axis}[
legend cell align={left},
legend style={fill opacity=0.8, draw opacity=1, text opacity=1, draw=white!80!black},
tick align=outside,
tick pos=left,
title={Objective Value},
x grid style={white!69.0196078431373!black},
xlabel={Iteration},
xmajorgrids,
xmin=-1.3, xmax=39.3,
xtick style={color=black},
y grid style={white!69.0196078431373!black},
ylabel={Objective},
ymajorgrids,
ymin=0.086929073001349, ymax=0.765186609257832,
ytick style={color=black}
]
\addplot [line width=1.64pt, blue]
table {%
0 7.343645123863469237e-01
1 7.274266265165393319e-01
2 7.255383333648666166e-01
3 7.188610711322995783e-01
4 7.130755929948184857e-01
5 7.104677807974565384e-01
6 7.073330877446863951e-01
7 6.977039395345019379e-01
8 6.776743659001573938e-01
9 6.476947673952653250e-01
10 6.122958238457260061e-01
11 5.779328995567558636e-01
12 5.501791468046994815e-01
13 5.321145519309712357e-01
14 5.218957316027424964e-01
15 5.139419383558284649e-01
16 5.006333107082631129e-01
17 4.825847623296318223e-01
18 4.648836216072309568e-01
19 4.462749287180571289e-01
20 4.236087576193520166e-01
21 3.894853486477977134e-01
22 3.462172893251926342e-01
23 2.954838987924565674e-01
24 2.441286081935567231e-01
25 2.007528922071740163e-01
26 1.681968959077970727e-01
27 1.475704052961760437e-01
28 1.368762707836694847e-01
29 1.320635410573941515e-01
30 1.288071761746039656e-01
31 1.243767258727496161e-01
32 1.188039619468131969e-01
34 1.168712121555231765e-01
35 1.145998797209151765e-01
36 1.145689736869864861e-01
37 1.145577204875489748e-01
38 1.145552432538854509e-01
};
\addlegendentry{$J(\Omega)$}
\end{axis}

\end{tikzpicture}}};
		\end{tikzpicture}
		\caption{Objective}
	\end{subfigure}
	\caption[Ex.2 Optimization Results]{Ex.2 Optimization Results}
	\label{Fig:Ex2FinalResults}
\end{figure}

\section{Conclusion}
We have derived the discrete adjoint for a general symplectic Euler particle class. In addition, for an SPH-representantive, based on boundary interactions by signed distance fields, the shape derivative was derived, that reduces to the shape derivative of a finite element interpolator. Results have been verified on a sample mesh inspired by practical applications to mitigate effects of coastal erosion. We point out that this can only serve as a first feasibility study, since the number of particles is still low and meshes are simplified. In future, comparisons between shape optimization results in an Eulerian, cf. to \cite{Schlegel20212} and \cite{Schlegel_2022}, and an Lagrangian framework, such as an Eulerian-Lagrangian coupling e.g. for a SWE-Exner model, appear of interest.

\section*{Acknowledgement}
This work has been supported by the Deutsche
Forschungsgemeinschaft within the Priority program SPP 1962 "Non-smooth and Complementarity-based Distributed Parameter Systems: Simulation and Hierarchical Optimization". The authors would like to thank Diaraf Seck (Université Cheikh Anta Diop, Dakar, Senegal) and Mame Gor Ngom (Université Cheikh Anta Diop, Dakar, Senegal) for helpful and interesting discussions within the project Shape Optimization Mitigating Coastal Erosion (SOMICE). 

\bibliographystyle{unsrt}
\bibliography{bibliography}  

\begin{thebibliography}{10}

\bibitem{Nayroles1992}
Bernard Nayroles, Gilbert Touzot, and Pierre Villon.
\newblock Generalizing the finite element method: Diffuse approximation and
  diffuse elements.
\newblock {\em Computational Mechanics}, 10:307--318, 1992.

\bibitem{Belytschko1994}
T.~Belytschko, Y.~Y. Lu, and L.~Gu.
\newblock Element-free galerkin methods.
\newblock {\em International Journal for Numerical Methods in Engineering},
  37(2):229--256, 1994.

\bibitem{Liu1995}
Wing~Kam Liu, Sukky Jun, and Yi~Fei Zhang.
\newblock Reproducing kernel particle methods.
\newblock {\em International Journal for Numerical Methods in Fluids},
  20(8-9):1081--1106, 1995.

\bibitem{Koshizuka1996}
Seiichi Koshizuka and Yoshiaki Oka.
\newblock Moving-particle semi-implicit method for fragmentation of
  incompressible fluid.
\newblock {\em Nuclear Science and Engineering}, 123:421--434, 1996.

\bibitem{Ogami1991}
Yoshifumi Ogami and Teruaki Akamatsu.
\newblock Viscous flow simulation using the discrete vortex model—the
  diffusion velocity method.
\newblock {\em Computers \& Fluids}, 19(3):433--441, 1991.

\bibitem{Monaghan2002}
Joseph Monaghan.
\newblock Sph compressible turbulence.
\newblock {\em Monthly Notices of the Royal Astronomical Society}, 335, 04
  2002.

\bibitem{Bobaru2002}
Florin Bobaru.
\newblock Meshless approach to shape optimization of linear thermoelastic
  solids.
\newblock {\em International Journal for Numerical Methods in Engineering},
  53:765 -- 796, 02 2002.

\bibitem{Grindeanu1998}
Iulian Grindeanu, Kuang-Hua Chang, Jiun-Shyan Chen, and Kyung Choi.
\newblock Design sensitivity analysis of hyperelastic structures using a
  meshless method.
\newblock {\em Aiaa Journal - AIAA J}, 36:618--627, 04 1998.

\bibitem{Chen2007}
J.~Chen and Nam Kim.
\newblock {\em Meshfree Method and Application to Shape Optimization}, pages
  389--414.
\newblock World Scientific, 09 2007.

\bibitem{Hohmann2019}
Raphael Hohmann and Christian Leithäuser.
\newblock Gradient-based shape optimization for the reduction of particle
  erosion in bended pipes, 2019.

\bibitem{Ha2011}
Youn~Doh Ha, Min-GEUN Kim, Hyun-seok Kim, and Seonho Cho.
\newblock Shape design optimization of sph fluid–structure interactions
  considering geometrically exact interfaces.
\newblock {\em Structural and Multidisciplinary Optimization}, 44:319--336, 09
  2011.

\bibitem{Koschier2017}
Dan Koschier and Jan Bender.
\newblock Density maps for improved sph boundary handling.
\newblock {\em Proceedings of the ACM SIGGRAPH / Eurographics Symposium on
  Computer Animation}, 2017.

\bibitem{Berggren2010}
Martin Berggren.
\newblock A unified discrete-continuous sensitivity analysis method for shape
  optimization.
\newblock In {\em CSC 2010}, 2010.

\bibitem{Schneider2008}
René Schneider and Peter Jimack.
\newblock On the evaluation of finite element sensitivities to nodal
  coordinates.
\newblock {\em Electronic Transactions on Numerical Analysis. Volume},
  32:134--144, 01 2008.

\bibitem{Schulz2016}
Volker~H. Schulz, Martin. Siebenborn, and Kathrin. Welker.
\newblock Efficient pde constrained shape optimization based on
  steklov--poincaré-type metrics.
\newblock {\em SIAM Journal on Optimization}, 26(4):2800--2819, 2016.

\bibitem{Schulz1996}
Volker Schulz.
\newblock Numerical optimization of the cross-sectional shape of turbine
  blades, 1996.

\bibitem{McNamara2004}
Antoine McNamara, Adrien Treuille, Zoran Popovi\'{c}, and Jos Stam.
\newblock Fluid control using the adjoint method.
\newblock {\em ACM Trans. Graph.}, 23(3):449–456, August 2004.

\bibitem{Wojtan2006}
C.~Wojtan, P.~Mucha, and Greg Turk.
\newblock Keyframe control of complex particle systems using the adjoint
  method.
\newblock In {\em SCA '06}, 2006.

\bibitem{Monaghan1985}
Joseph~John Monaghan and John Lattanzio.
\newblock A refined particle method for astrophysical problems.
\newblock {\em Astronomy and Astrophysics}, 149:135--143, 1985.

\bibitem{Muller2003}
Matthias M{\"u}ller, David Charypar, and Markus~H. Gross.
\newblock Particle-based fluid simulation for interactive applications.
\newblock In {\em SCA '03}, 2003.

\bibitem{Raviart1985}
P.~A. Raviart.
\newblock An analysis of particle methods.
\newblock In Franco Brezzi, editor, {\em Numerical Methods in Fluid Dynamics},
  pages 243--324, Berlin, Heidelberg, 1985. Springer Berlin Heidelberg.

\bibitem{Moussa2000}
B.~Ben~Moussa and J.~P. Vila.
\newblock Convergence of sph method for scalar nonlinear conservation laws.
\newblock {\em SIAM Journal on Numerical Analysis}, 37(3):863--887, 2000.

\bibitem{Dilisio1998}
R.~{Di Lisio}, E.~Grenier, and M.~Pulvirenti.
\newblock The convergence of the sph method.
\newblock {\em Computers \& Mathematics with Applications}, 35(1):95--102,
  1998.

\bibitem{Oelschlaeger1991}
K.~Oelschl{\"a}ger.
\newblock {\em On the Connection Between Hamiltonian Many-particle Systems and
  the Hydrodynamical Equations}.
\newblock Universit{\"a}t Heidelberg. SFB 123, 1991.

\bibitem{Kulsegaram2004}
Sivakumar Kulasegaram, Javier Bonet, Roland Lewis, and Matthew Profit.
\newblock A variational formulation based contact algorithm for rigid
  boundaries in 2d sph applications.
\newblock {\em Computational Mechanics}, 33:316--325, 03 2004.

\bibitem{Vacondio2012}
Renato Vacondio, Benedict Rogers, P.K. Stansby, and Paolo Mignosa.
\newblock Sph modeling of shallow flow with open boundaries for practical flood
  simulation.
\newblock {\em Journal of Hydraulic Engineering}, 138:530--541, 06 2012.

\bibitem{Christof2017}
Constantin Christof, Christian Clason, Christian Meyer, and Stephan Walther.
\newblock Optimal control of a non-smooth semilinear elliptic equation.
\newblock {\em Mathematical Control and Related Fields (MCRF)}, 8:247--276, 05
  2017.

\bibitem{Reyes2011}
Juan De~los Reyes.
\newblock Optimal control of a class of variational inequalities of the second
  kind.
\newblock {\em SIAM J. Control and Optimization}, 49:1629--1658, 07 2011.

\bibitem{Solenthaler2011}
Barbara Solenthaler, Peter Bucher, Nuttapong Chentanez, Matthias Müller, and
  Markus Gross.
\newblock {SPH Based Shallow Water Simulation}.
\newblock In Jan Bender, Kenny Erleben, and Eric Galin, editors, {\em Workshop
  in Virtual Reality Interactions and Physical Simulation "VRIPHYS" (2011)}.
  The Eurographics Association, 2011.

\bibitem{Choi1987}
Kyung~K. Choi.
\newblock Shape design sensitivity analysis and optimal design of structural
  systems.
\newblock In Carlos~A. Mota~Soares, editor, {\em Computer Aided Optimal Design:
  Structural and Mechanical Systems}, pages 439--492. Springer Berlin
  Heidelberg, 1987.

\bibitem{Sokolowski1992}
J.~Soko{\l}owski and J.P. Zol{\'e}sio.
\newblock {\em Introduction to Shape Optimization: Shape Sensitivity Analysis}.
\newblock Springer series in computational mathematics. Springer-Verlag, 1992.

\bibitem{Delfour2011}
M.~C. Delfour and J.~P. Zolésio.
\newblock {\em Shapes and Geometries}.
\newblock Society for Industrial and Applied Mathematics, second edition, 2011.

\bibitem{Allaire2016}
Grégoire Allaire, François Jouve, and Georgios Michailidis.
\newblock Thickness control in structural optimization via a level set method.
\newblock {\em Structural and Multidisciplinary Optimization}, 53, 06 2016.

\bibitem{Mohammadi2001}
Bijan Mohammadi and Olivier Pironneau.
\newblock Applied shape optimization in fluids.
\newblock {\em Applied Shape Optimization for Fluids}, 05 2001.

\bibitem{Baker1999}
Timothy~J. Baker and Peter~A. Cavallo.
\newblock Dynamic adaptation for deforming tetrahedral meshes.
\newblock In {\em 14th Computational Fluid Dynamics Conference}, 1999.

\bibitem{Becker2007}
Markus Becker and Matthias Teschner.
\newblock Weakly compressible sph for free surface flows.
\newblock In Dimitris Metaxas and Jovan Popovic, editors, {\em
  Eurographics/SIGGRAPH Symposium on Computer Animation}. The Eurographics
  Association, 2007.

\bibitem{FEniCS2015}
Martin~S. Aln{\ae}s, Jan Blechta, Johan Hake, August Johansson, Benjamin
  Kehlet, Anders Logg, Chris Richardson, Johannes Ring, Marie~E. Rognes, and
  Garth~N. Wells.
\newblock The fenics project version 1.5.
\newblock {\em Archive of Numerical Software}, 3(100), 2015.

\bibitem{Schlegel20212}
Luka Schlegel and Volker Schulz.
\newblock Shape optimization for the mitigation of coastal erosion via shallow
  water equations, 2021.

\bibitem{Schlegel_2022}
Luka Schlegel and Volker Schulz.
\newblock Shape optimization for the mitigation of coastal erosion via porous
  shallow water equations.
\newblock {\em International Journal for Numerical Methods in Engineering},
  2022.

\end{thebibliography}






\end{document}